\documentclass[conference, twocolumn, 11pt]{IEEEtran}
\usepackage{amsmath,amssymb,amsfonts, amsthm}
\usepackage{algorithm}
\usepackage{algpseudocode}
\usepackage{graphicx}
\usepackage{textcomp}
\usepackage{xcolor}
\usepackage{subcaption}
\usepackage{url}
\usepackage[authoryear]{natbib}   
\usepackage{graphicx}
\usepackage{mdframed}
\usepackage{graphicx}
\usepackage{xspace}
\usepackage{multirow}
\usepackage[american]{babel}
\usepackage{mathtools} 
\usepackage{booktabs} 
\usepackage{array}
\usepackage{tikz} 
\definecolor{citationgreen}{RGB}{0,125,90}
\usepackage[colorlinks=true,linkcolor=blue,citecolor=citationgreen,urlcolor=magenta]{hyperref}

\graphicspath{{Figures/}}

\tikzset{bmp/.style={
    scale=6,
    line cap=round,
    line join=round
}}

\algtext*{EndIf}
\algtext*{EndFor}
\algtext*{EndWhile}
\algtext*{EndProcedure}
\algtext*{EndFunction}

\newtheorem{proposition}{Proposition}
\newtheorem{theorem}[proposition]{Theorem}
\newtheorem{assumption}{Assumption}

\newtheoremstyle{definitionbold}
  {}{}                    
  {\itshape}           
  {}                      
  {\bfseries}             
  {.}                     
  {.5em}                  
  {\thmname{#1}\thmnumber{ #2}\thmnote{ \textbf{(#3)}}}

\theoremstyle{definitionbold}
\newtheorem{definition}{Definition}

\theoremstyle{definition}

\newtheorem{remark}{Remark}
\newtheorem{example}{Example}

\newcommand{\gcn}{\textsc{Gcn}\xspace}
\newcommand{\gat}{\textsc{Gat}\xspace}

\newcommand{\cora}{\textsc{Cora}\xspace}
\newcommand{\cseer}{\textsc{CiteSeer}\xspace}

\newcommand{\mA}{\mathcal{A}}
\newcommand{\mB}{\mathcal{B}}
\newcommand{\md}{\mathrm{d}}
\newcommand{\mE}{\mathcal{E}}
\newcommand{\mG}{\mathcal{G}}
\newcommand{\mH}{\mathcal{H}}

\newcommand{\mR}{\mathcal{R}}
\newcommand{\mS}{\mathcal{S}}
\newcommand{\mV}{\mathcal{V}}
\newcommand{\bX}{\mathbf{X}}

\newcommand{\bY}{\mathbf{Y}}

\newcommand{\mZ}{\mathcal{Z}}

\tikzset{bmp/.style={
    scale=6,
    line cap=round,
    line join=round
}}

\newcommand{\BMPregion}[4][6]{%
\begin{tikzpicture}[scale=#1, line cap=round, line join=round]

\path[use as bounding box] (-0.10,-0.10) rectangle (1.18,1.18);

\def\eta{#2}
\def\epsl{#3}
\def\epsr{#4}

\pgfmathsetmacro{\a}{(1/\eta)*exp(-\epsl)}
\pgfmathsetmacro{\b}{(\eta)*exp(-\epsr)}
\pgfmathsetmacro{\ab}{\a*\b}
\pgfmathsetmacro{\den}{1-\ab}

\pgfmathsetmacro{\xA}{0}
\pgfmathsetmacro{\yA}{1}
\pgfmathsetmacro{\xB}{(1-\b)/\den}
\pgfmathsetmacro{\yB}{1-\a*\xB}
\pgfmathsetmacro{\xC}{1}
\pgfmathsetmacro{\yC}{0}

\pgfmathsetmacro{\yD}{(exp(-\epsr) - 1/\eta)/(exp(-\epsr) - exp(\epsl))}
\pgfmathsetmacro{\xD}{1 - \eta*exp(\epsl)*\yD}

\draw[thin] (0,0) rectangle (1,1);

\fill[blue!12]
  (\xA,\yA) --
  (\xB,\yB) --
  (\xC,\yC) --
  (\xD,\yD) -- cycle;

\draw[very thick]
  (\xA,\yA) --
  (\xB,\yB) --
  (\xC,\yC) --
  (\xD,\yD) -- cycle;

\draw[->] (-0.06,0) -- (1.12,0) node[below] {$\alpha$};
\draw[->] (0,-0.06) -- (0,1.12) node[left] {$\beta$};
\end{tikzpicture}%
}

\newcommand{\BMPregionInfL}[3][6]{%
\begin{tikzpicture}[scale=#1, line cap=round, line join=round]

\path[use as bounding box] (-0.10,-0.10) rectangle (1.18,1.18);

\def\eta{#2}
\def\epsr{#3}

\pgfmathsetmacro{\b}{\eta*exp(-\epsr)}

\coordinate (A) at (0,1);

\coordinate (B) at ({1-\b},1);

\coordinate (C) at (1,0);

\coordinate (D) at (\b,0);

\draw[thin] (0,0) rectangle (1,1);

\fill[blue!12] (A)--(B)--(C)--(D)--cycle;
\draw[very thick] (A)--(B)--(C)--(D)--cycle;

\draw[->] (-0.06,0) -- (1.12,0) node[below] {$\alpha$};
\draw[->] (0,-0.06) -- (0,1.12) node[left] {$\beta$};

\draw (1,0) ++(0,0.015) -- ++(0,-0.03);

\draw (0,1) ++(0.015,0) -- ++(-0.03,0);

\end{tikzpicture}%
}

\newcommand{\BMPregionInfR}[3][6]{%
\begin{tikzpicture}[scale=#1, line cap=round, line join=round]

\path[use as bounding box] (-0.10,-0.10) rectangle (1.18,1.18);

\def\eta{#2}
\def\epsl{#3}

\pgfmathsetmacro{\a}{(1/\eta)*exp(-\epsl)}

\coordinate (A) at (1,0);

\coordinate (B) at (1,{1-\a});

\coordinate (C) at (0,1);

\coordinate (D) at (0,\a);

\draw[thin] (0,0) rectangle (1,1);

\fill[blue!12] (A)--(B)--(C)--(D)--cycle;
\draw[very thick] (A)--(B)--(C)--(D)--cycle;

\draw[->] (-0.06,0) -- (1.12,0) node[below] {$\alpha$};
\draw[->] (0,-0.06) -- (0,1.12) node[left] {$\beta$};

\draw (1,0) ++(0,0.015) -- ++(0,-0.03);

\draw (0,1) ++(0.015,0) -- ++(-0.03,0);

\end{tikzpicture}%
}

\begin{document}
\title{Bayesian Membership Privacy for Graph Neural Networks}

\thanks{Identify applicable funding agency here. If none, delete this.}

\author{\IEEEauthorblockN{Sinan Yıldırım}
\IEEEauthorblockA{\textit{Faculty of Engineering and Natural Sciences} \\
\textit{Sabancı University}\\
İstanbul, Turkey \\
sinanyildirim@sabanciuniv.edu}
\and
\IEEEauthorblockN{Megha Khosla}
\IEEEauthorblockA{\textit{Intelligent Systems Department} \\
\textit{Delft University of Technology (TU Delft)}\\
Delft, The Netherlands \\
M.Khosla@tudelft.nl}
}

\maketitle

\begin{abstract}
Existing privacy analyses for Graph Neural Networks (GNNs) largely inherit assumptions from non-graph settings, overlooking structural correlations and stochastic training-graph sampling. In particular, node-dependent priors make type-I and type-II errors alone insufficient to characterize the best membership inference test. To address this, we introduce Bayesian Membership Privacy (BMP), a sampling-aware formulation of node-level membership privacy that incorporates node-dependent priors and treats graph sampling probabilities as part of the adversary’s knowledge. BMP casts membership inference as a Bayesian hypothesis test and accordingly quantifies membership privacy in terms of posterior membership probability. We explore theoretical properties of BMP in relation to the existing definitions in the literature. We further propose a practical, sampling-aware auditing mechanism to estimate the parameters of BMP as a measure of node-level privacy leakage in GNNs. We conduct experiments on benchmark graph datasets and show that BMP yields fine-grained privacy insights that are not visible through global attack accuracy alone. 
\end{abstract}

\section{Introduction}
Graph Neural Networks (GNNs) are now a standard modeling paradigm for learning from structured and relational data, with successful applications in domains such as molecular modeling, recommender systems, and knowledge graph reasoning. Alongside their increasing adoption, recent studies have shown that trained GNNs can inadvertently expose information about the data used during training, raising concerns about privacy in graph-based learning~\citep{duddu2020quantifying,conti2022label,he2021stealing,jnaini2022powerful,wang2024subgraph,wu2021adapting,dai2022comprehensive,wang2021membership,olatunji2021membership}. Much of this work draws inspiration from privacy attacks developed for independent data records, even though graph-structured data fundamentally violates independence assumptions.

We study node-level membership privacy in GNNs trained for node classification, where the adversary seeks to determine whether a specific node was included in the training set. Unlike i.i.d.\ settings, nodes in a graph are interdependent through edges and overlapping neighborhoods. As a result, the inclusion of a single node can affect the learned representations and predictions of many others. Empirical implications of this phenomenon on MI were first observed in~\citep{olatunji2021membership}, where membership inference risk did not consistently correlate with the generalization gap. 

Not only do structural dependencies in graphs affect empirical membership inference behavior, but they also make standard differential privacy (DP) notions difficult to apply. In particular, node-level DP \citep{sajadmanesh2022gap, olatunji2023releasing, hegoing} typically assumes a worst-case adversary who knows the entire dataset except for one individual. In graph settings, however, nodes are coupled through edges and shared neighborhoods, and training nodes are selected via a sampling process that interacts with this structure. As a result, some nodes are inherently more likely to be included than others, and knowledge of the remaining graph can already reveal substantial information about a node’s membership. 

Moreover, unlike the i.i.d.\ setting—where the success of membership inference attacks (MIA) can be formally linked to DP guarantees under exchangeability assumptions—no such relationship is known for GNNs, precisely because structural interdependencies break exchangeability between training and non-training nodes \citep{khosla2026does}. We fill this major gap by developing a \emph{Bayesian notion of membership privacy} (BMP) that explicitly incorporates prior inclusion probabilities as a proxy for the data-generating process and quantifies privacy in terms of posterior belief updates. BMP is developed in Section \ref{sec: Bayesian Membership Privacy for Graphs} on the basis that membership inference cannot be treated as a uniform binary hypothesis test applied identically to all nodes. Instead, node-level privacy risk is inherently heterogeneous: nodes differ both in their sampling-induced prior inclusion probabilities and in the strength of the signal available to an adversary after training. Consequently, evaluating privacy leakage solely through global attack accuracy or fixed type-I/type-II error bounds fails to capture the true, node-specific risk. We \emph{model membership inference as a Bayesian hypothesis testing problem with node-dependent priors}.

Bayesian versions of privacy have been defined in several works \citep{Lee_and_Clifton_2012, li2013membership, Yang_et_al_2015, Triastcyn_and_Faltings_2020}. BMP differs from the existing definitions in several ways, the most significant of which are that it is defined directly in terms of posterior probabilities and is asymmetrical. 

In Section \ref{sec: Definitions for membership privacy}, we provide several properties of BMP, such as composition, post-processing, and its relation to pufferfish type privacy. Furthermore, in Section \ref{sec: Statistical Guarantees of BMP} we establish theoretical relations between BMP and the capability of any given MIA that operates on the output of a BMP algorithm. Those relations concern the expected cost and the probability of wrong decision by the adversary, as well as a one-to-one relation between BMP and type-I and type-II error probabilities of the MIA. The last one enables a new method to audit an algorithm by estimating the BMP parameters using MIA results, described in Section \ref{sec: Estimating privacy parameters from attack results}.

Furthermore, in Section \ref{sec: Designing MIAs and measuring their performance}, we design MIAs to estimate BMP privacy parameters for GNN models across different graph sampling strategies, training set sizes, edge settings, and noise levels. The results show that our framework supports fine-grained, uncertainty-aware analysis, revealing how sampling and model design choices shape node-level privacy leakage.

\section{Background}
\subsection{Graph Neural Networks} \label{sec: Graph Neural Networks}
\label{sec:GNNs}

Let $\mathcal{G} = (\mathcal{V}, \mathcal{E})$ denote a graph, where $\mathcal{V}$ is the set of $N$ nodes indexed by integers and $\mathcal{E} \subseteq \mathcal{V} \times \mathcal{V}$ is the set of edges connecting pairs of nodes. Each node $i\in \{1,2,\ldots, N\}$ is associated with an input feature vector $\mathbf{x}_i \in \mathbb{R}^{d}$. Collectively, the node features are represented by the feature matrix $\mathbf{X} \in \mathbb{R}^{N \times d}$, where the $i$-th row corresponds to the feature vector $\mathbf{x}_i$ of node $i$. Let $\mathbf{Y} \in \{0,1\}^{N \times c}$ denote the one-hot encoded label matrix for the training nodes, where $c$ is the number of classes.

GNNs are a class of machine learning models designed to operate on graph-structured data.
In the supervised node classification setting, a GNN is trained using the triplet $(\mathcal{G}, \mathbf{X}, \mathbf{Y}_{\text{train}})$, where the graph structure $\mathcal{G}$ and node features $\mathbf{X}$ define the input, and supervision is provided only through the labels of the training nodes encoded in $\mathbf{Y}_{\mathrm{train}}$.

The core operation in GNNs is graph convolution, where each node representation is updated by aggregating its own features together with the features of its neighbors, followed by a non-linear transformation. Let $\mathbf{x}_i^{(\ell)}$ denote the feature representation of node $i$ at layer $\ell$, with $\mathbf{x}_i^{(0)} = \mathbf{x}_i$.
Let $\mathcal{N}(i) = \{ j \mid (i,j) \in \mathcal{E} \}$ denote the set of 1-hop neighbors of node $i$ induced by the edge set $\mathcal{E}$. The graph convolution operation at layer $\ell$ is defined as:
\begin{align*}
\mathbf{z}_i^{(\ell)} &=
\operatorname{AGG}^{(\ell)}\!\left(
\mathbf{x}_i^{(\ell-1)}, 
\{ \mathbf{x}_j^{(\ell-1)} \mid j \in \mathcal{N}(i) \}
\right),  \nonumber \\
\mathbf{x}_i^{(\ell)} &=
\operatorname{TRANS}^{(\ell)}\!\left(\mathbf{z}_i^{(\ell)}\right).
\end{align*}
Here, the aggregation $\operatorname{AGG}^{(\ell)}$ operator is GNN-specific and defines how neighborhood information is combined, while the transformation operation $\operatorname{TRANS}^{(\ell)}$ typically consists of a learnable linear mapping followed by a non-linear activation function. At the final layer (denoted $L$), a softmax function is applied to obtain the predicted class probabilities for each node:
$\hat{\mathbf{y}}_i = \operatorname{softmax}\!\left(\mathbf{z}_i^{(L)} \mathbf{\Theta}\right),
$
where $\hat{\mathbf{y}}_i \in \mathbb{R}^{c}$ is the predicted class probability vector for node $i$, $\mathbf{\Theta}$ is a learnable weight matrix, and the $j$-th entry $\hat{\mathbf{y}}_{i,j}$ denotes the predicted probability that node $i$ belongs to class $j$. The model is trained by minimizing the cross-entropy loss between $\hat{\mathbf{y}}_i$ and the ground-truth label vector $\mathbf{Y}_{\mathrm{train}}[i]$.


Because GNNs aggregate information from local neighborhoods, a query for node $i$ at inference time requires access not only to its feature vector $\mathbf{x}_i$ but also to the features of its $L$-hop neighbors. Accordingly, a node-level input can be written as
$
\mathcal{I}_i := \bigl(\mathbf{x}_i,\ { \mathbf{x}_j \mid j \in \mathcal{N}^{(L)}(i) } \bigr),
$
where $\mathcal{N}^{(L)}(i)$ denotes the set of nodes within $L$ hops of node $i$. Consequently, two learning regimes are considered for GNNs, namely \emph{transductive and inductive} settings.

In the \emph{transductive} setting, the full graph $\mathcal{G}$ and feature matrix $\mathbf{X}$—including both training and test nodes—are available during training. Inference reuses this fixed graph, so the neighborhood $\mathcal{N}^{(L)}(i)$ of any node is unchanged regardless of which nodes are labeled. Consequently, altering the supervision set does not modify the underlying graph structure. In contrast, in the \emph{inductive} setting, training is performed on a subgraph induced by the training nodes, and inference applies the learned model to previously unseen nodes or graphs. The distinction between these learning regimes is crucial for membership privacy. In the transductive regime, all nodes appear in the training input (with labels provided only for a subset), making node-level membership ill-defined from a privacy perspective. In the inductive regime, however, training and inference operate on disjoint node sets, allowing a meaningful definition of node-level membership and privacy.

When considering the inductive setting, a key question immediately arises: how was the training graph sampled from the underlying universe graph, which may itself be partially or fully available to the adversary.
In this case, membership inference does not reduce to identifying whether a particular node in the given graph was labelled, but instead to determining whether it was selected into the training subgraph under a particular sampling mechanism. Consequently, node-level membership privacy must be defined \emph{with respect to the training graph sampling process}, rather than solely in terms of model access or label supervision.

\subsection{Membership Inference Attacks and Membership Privacy}
Membership privacy \citep{li2013membership} concerns the risk that an adversary can determine whether a particular individual or data record participated in a dataset. Such participation alone may be sensitive, and may also serve as a proxy for more invasive goals such as partial data reconstruction or attribute inference. In machine learning, membership privacy specifically refers to whether an adversary, given access to a trained model and auxiliary information, can infer if a particular data instance was used during training.

MIAs~\citep{shokri2017membership} provide a central empirical framework for assessing membership privacy leakage in machine learning. These attacks demonstrate that an adversary can determine whether a specific data point was used during training by exploiting model outputs. Subsequent work has connected membership risk to overfitting~\citep{yeom2018privacy}, developed stronger and more efficient black-box attacks~\citep{salem2018ml}, and analyzed how confidence scores, loss values, and calibration influence leakage~\citep{carlini2022membership}.

In the context of GNNs, prior studies have empirically shown measurable leakage regarding the inclusion of nodes, edges, or subgraphs in the training data~\citet{olatunji2021membership, he2021stealing, wang2021membership}, confirming that relational learning models are likewise vulnerable to membership inference.

In settings where input instances are sampled independently of one another, the success rate of membership inference attacks provides lower bounds on the privacy budget of DP mechanisms~\citep{yeom2018privacy,carlini2022membership,jayaraman2019evaluating,humphries2023investigating}. Such a connection relies critically on the assumption of statistical exchangeability between member and non-member data points \citep{humphries2023investigating}. Exchangeability means that the joint distribution of the dataset remains unchanged when a single data point is replaced by another point drawn from the same distribution. As a consequence, the joint probability of observing a training set together with a candidate member and a candidate non-member is invariant to their ordering. In other words, from the perspective of the data-generating distribution, a member and a non-member are statistically indistinguishable. Such a symmetry implies that the data distribution itself provides no information that could help an adversary decide which of the two points was used during training. Any successful membership inference must therefore rely solely on differences induced by the learning algorithm.

However, no formal link has yet been established between MIA success and DP guarantees for GNNs. For node-level attacks, a key challenge is that nodes are connected and share neighborhoods, so training and test nodes are not always guaranteed to be statistically exchangeable, especially in the inductive settings \citep{khosla2026does}. 

\section{Bayesian Membership Privacy} \label{sec: Bayesian Membership Privacy for Graphs}

\subsection{Setting and the probabilistic model} \label{sec: Setting and the probabilistic model}
Given a graph $\mG = (\mV, \mE)$ with node feature and label matrices $\mathbf{X}$ and $\mathbf{Y}$ as defined in Section~\ref{sec:GNNs}, we consider a randomized learning algorithm $\mA = (\mS, \Phi)$ operating in two stages. First, a subset $S \subseteq \mV$ is sampled according to a stochastic subset selection mechanism $\mS$, inducing a training subgraph $\mG_S$ and $\mathbf{X}_{S}, \mathbf{Y}_{S}$ accordingly. Second, a training procedure $\Phi$ is applied to $(\mG_S, \mathbf{X}_S, \mathbf{Y}_S)$ to produce an output $W \in \mathcal{W}$. The training procedure $\Phi$ may itself be randomized. We construct the corresponding probabilistic model as follows.

\paragraph{Prior membership probabilities} For each node $i \in \{1, \ldots, N\}$, we define the membership variable
$M_i := \mathbb{I}(i \in S),$
where $S \subseteq \mV$ is the sampled training set. Thus, $M_i \in \{0,1\}$ indicates whether node $i$ was selected into the training subgraph. Let $\gamma_i$ denote the marginal distribution of $M_i$, defined as
\[
\gamma_i(m) := \Pr(M_i = m), \quad m \in \{0,1\}.
\]
We also define conditional membership probabilities. For any vector
$m_{1:N} \in \{0,1\}^N$, let
\[
\gamma_i^{c}(m_i \mid m_{-i}):= \Pr(M_i = m_i \mid M_{-i} = m_{-i}).
\]
where $M_{-i} := (M_{1:i-1}, M_{i+1:N})$ (similarly for $m_{-i}$).

\paragraph{Likelihood of output} We assume that, for any $m_{1:N} \in \{0, 1\}^{N}$, the conditional distribution of $W$ given $M_{1:N} =m_{1:N}$ has a probability density $f(\cdot | m_{1:N})$ with a common measure $\md w$ such that, for any $O \subseteq \mathcal{W}$, 
\[
\Pr(W \in O | M_{1:N} = m_{1:N}) = \int_{O} f(w | m_{1:N}) \md w.
\]
This allows us to define
\[
f_{i}(w | m_{i}) := \sum_{m_{-i}} f(w | m_{1:N}) \Pr(M_{-i} = m_{-i} | M_{i} = m_{i}) 
\]
to be the density of the conditional probability distribution of $W$ given $M_{i} = m_{i}$. Moreover, we will denote the marginal density of $W$ as $f_{0}$, and note that for any $i \in \{1, \ldots, N \}$,
\[
f_{0}(w) = f_{i}(w | 1) \Pr(M_{i} = 1) + f_{i}(w | 0) \Pr(M_{i} = 0).
\]

\paragraph{Posterior membership probabilities} We define posterior probabilities conditional on $\mA$'s output. For any $i \in \{1, \ldots, N\}$ and $m_{1:N} \in \{0, 1\}^{N}$, and $w \in \mathcal{W}$ such that $f_{0}(w) > 0$, define marginal and full conditional posteriors
\begin{align*}
&\pi_{i}(m_{i} | w) = \Pr(M_{i} = m_{i} | W = w),\\
&\pi_{i}^{c}(m_{i} | m_{-i}, w) := \Pr(M_{i} = m_{i} | M_{-i} = m_{-i}, W = w).
\end{align*}

\subsection{Defining membership privacy} \label{sec: Definitions for membership privacy}
The definitions in this section are stated under the setting introduced in Section~\ref{sec: Setting and the probabilistic model}. Regarding the adversary's knowledge \emph{before} observing the output of the algorithm $\mA$, we assume the worst-case scenario.

\begin{assumption}[Adversary's knowledge] \label{assmp: Adversarys knowledge}
The adversary knows the graph $\mG = (\mV, \mE)$, the node feature matrix $\mathbf{X}$, the ground truth label matrix $\mathbf{Y}$ and the algorithm $\mA = (\mS, \Phi)$. The adversary observes the algorithm's output $W$. However, the adversary does \emph{not} know the sampled subset $S$, equivalently the membership vector $M_{1:N}$.
\end{assumption}

Membership alone can indeed reveal sensitive information, even if $\mathbf{X}$ is known. We give two examples where membership and non-membership, respectively, can be inherently sensitive:

\begin{example}[Membership is sensitive] Consider a social graph such as Facebook’s. Suppose a platform trains a model on a subset of users selected for a mental-health intervention classifier. Publicly observable features may include friendship graph, age range, etc. Assume those are known to the adversary, as in Assumption \ref{assmp: Adversarys knowledge} (hence constitute $\mathbf{X}$). However, the training subset consists of users internally flagged as likely vulnerable based on private signals unavailable publicly (e.g., help-center interactions, deleted posts), which are not in $\mathbf{X}$. Then learning that the user $i$ belonged to the training subset does not merely reveal Facebook’s pipeline. It reveals that user $i$ was considered part of a sensitive target population, which is a privacy breach about the individual.
\end{example}

\begin{example}[Non-membership is sensitive] 
Suppose a company trains an internal model on applicants shortlisted for executive leadership potential. Then, inferring that an employee is a non-member may reveal that they were not considered for promotion, which can be highly sensitive professionally.
\end{example}

We begin with a non-Bayesian definition of membership privacy. 
\begin{definition}[Membership privacy (MP)] \label{def: MP}
We say $\mA$ is $\varepsilon$-MP if for any $w \in \mathcal{W}$ such that $f_{0}(w) > 0$, and for any $i \in \{1, \ldots, N\}$, we have $
e^{-\varepsilon} \leq f_{i}(w | 1)/f_{i}(w | 0) \leq e^{\varepsilon}.
$
\end{definition}
Definition \ref{def: MP} focuses on the likelihood ratio given the membership information of a single node, and it can be viewed as a pufferfish-type privacy definition \citep{Kifer_and_Machanavajjhala_2014, Kifer_and_Machanavajjhala_2025} specific to membership information. Therefore, it is less strict than the classical DP definition, which bounds the ratio between likelihoods given the entire memberships differing by one entry. For reference, we provide the DP version below.
\begin{definition}[Membership DP (MDP)] \label{def: MDP}
We say $\mA$ is $\varepsilon$-MDP if for any $w \in \mathcal{W}$ such that $f_{0}(w) > 0$, and for any $i \in \{1, \ldots, N\}$ and $m_{-i} \in \{0, 1\}^{N-1}$, we have 
\[
e^{-\varepsilon} \leq \frac{f(w | m_{1:i-1}, 1, m_{i+1:N})}{f(w | m_{1:i-1}, 0, m_{i+1:N})} \leq e^{\varepsilon}.
\]
\end{definition}
The requirement in Definition~\ref{def: MDP} must hold uniformly for all configurations $m_{-i}$ and is therefore strictly stronger. While such distribution-independent guarantees are attractive in classical i.i.d.\ settings, they can be overly restrictive in graph-structured domains where memberships are statistically dependent. Enforcing worst-case bounds over arbitrary membership configurations may either severely degrade utility or fail to meaningfully reflect how leakage arises through structural correlations.

Even the likelihood-based notion in Definition~\ref{def: MP} abstracts away prior membership probabilities. In relational data, however, prior inclusion probabilities are shaped by the training-graph sampling process and may themselves encode information relevant to privacy. Since neither MP nor MDP accounts for the data-generating distribution, leakage induced by structural dependencies is not fully captured. We, therefore, adopt a Bayesian perspective, where privacy is defined relative to explicit inclusion priors that serve as a proxy for the data-generating process. 

\begin{definition}[Bayesian Membership Privacy (BMP)] \label{def: BMP}
We say $\mA$ is $(\varepsilon_{L}, \varepsilon_{R})$-BMP if any $w \in \mathcal{W}$ such that $f_{0}(w) > 0$, and for any $i \in \{1, \ldots, N\}$, we have 
\begin{equation} \label{eq: BMP inequality}
e^{-\varepsilon_{L}} \leq \frac{\pi_{i}(1 | w)}{\pi_{i}(0 | w)} = \frac{\gamma_{i}(1) f_{i}(w | 1)}{\gamma_{i}(0) f_{i}(w | 0)} \leq e^{\varepsilon_{R}}.
\end{equation}
\end{definition}
The BMP-R and BMP-L bear similarities with the positive posterior membership privacy of \citet{li2013membership}, which we discuss in Appendix \ref{sec: Related definitions}.
Apart from the inclusion of the prior membership probabilities, another feature of Definition \ref{def: BMP} is that the inequality \eqref{eq: BMP inequality} is \emph{asymmetric}. This is deliberately done for flexibility and to be able to address certain issues regarding the meaning of privacy. The most important of those issues is the significance of the knowledge of membership relative to the knowledge of non-membership. In many cases, knowledge of non-membership is not a privacy breach, and one should only focus on the knowledge of membership by the adversary. The following one-sided versions of BMP correspond to the two extreme cases of such an imbalance.
\begin{definition}[Right-sided and Left-sided BMP] \label{def: BMP-R, BMP-L}
$\mA$ is called $\varepsilon$-BMP-R if it is $(\infty, \varepsilon)$-BMP, that is, for all $i = 1, \ldots, N$ and $w \in \mathcal{W}$ such that $f_{0}(w) > 0$, 
\[
\pi_{i}(1 | w)/\pi_{i}(0 | w) \leq e^{\varepsilon}.
\]
Similarly, $\mA$ is called $\varepsilon$-BMP-L if it is $(\varepsilon, \infty)$-BMP.
\end{definition}

Just like the ``differential" version of MP, a ``differential'' version of BMP can be defined, which limits the attacker's ability to discover the membership of a single node given the membership information of every other node. We provide a definition for completeness, but note that this formulation is often impractical in graph-based models, since knowledge of the sampling mechanism and the remaining memberships can already reveal a node’s inclusion independently of the learning algorithm.
\begin{definition}[Bayesian MDP (BMDP)] \label{def: BMDP}
$\mA = (\mS, M)$ is $(\varepsilon_{L}, \varepsilon_{R})$-BMDP if for every $i \in \{1, \ldots, N\}$, $m_{-i} \in \{0, 1\}^{N-1}$, output $w \in \mathcal{W}$ with $f_{0}(w) > 0$, 
\begin{equation} \label{eq: BGDP definin inequality}
e^{-\varepsilon_{L}} \leq \pi_{i}^{c}(1 | m_{-i}, w)/\pi_{i}^{c}(0 | m_{-i}, w)  \leq e^{\varepsilon_{R}}
\end{equation}
\end{definition}
The special cases of $\varepsilon$-BMDP-L and $\varepsilon$-BMDP-R can also be defined similarly as $(\varepsilon, \infty)$-BMDP and $(\infty, \varepsilon)$-BMDP. In fact, $\varepsilon$-BMDP-R corresponds to the differential identifiability of \citet{Lee_and_Clifton_2012}.

Table \ref{tab:privacy-definitions} summarizes the membership privacy definitions given in this section. 
\begin{table*}[t]
\centering
\normalsize
\renewcommand{\arraystretch}{1.35}
\setlength{\tabcolsep}{4pt}
\begin{tabular}{@{}>{\centering\arraybackslash}m{0.05\textwidth} >{\raggedright\arraybackslash}m{0.18\textwidth} >{\centering\arraybackslash}m{0.13\textwidth} >{\raggedright\arraybackslash}m{0.29\textwidth} >{\raggedright\arraybackslash}m{0.28\textwidth}@{}}
\toprule
Defn & \textbf{Name} & \textbf{Parameters} & \textbf{Explanation} & \textbf{Condition} \\
\midrule
\ref{def: MP} & MP & $\varepsilon$ &
LR for node $i$ being in vs. out & $e^{-\varepsilon}
\le
\frac{f_i(w \mid 1)}{ f_i(w \mid 0)}
\le
e^{\varepsilon}
$
\\ \midrule
\ref{def: MDP} & MDP & $\varepsilon$ &
DP-like, LR for node $i$ given $m_{-i}$ &
$e^{-\varepsilon}
\le
\frac{f(w \mid  m_{1:i-1}, 1, m_{i+1:N})}{
f(w \mid m_{1:i-1}, 0, m_{i+1:N})}
\le
e^{\varepsilon}$
\\ \midrule
\ref{def: BMP} & BMP & $(\varepsilon_L,\varepsilon_R)$ &
Posterior odds of membership &
$
e^{-\varepsilon_L}
\le
\frac{\pi_i(1 \mid w)}{\pi_i(0 \mid w)}
\le
e^{\varepsilon_R}
$
\\ 
\ref{def: BMP-R, BMP-L} & BMP-R, BMP-L & $\varepsilon$ &
One-sided BMP &
BMP-R: $(\infty,\varepsilon)$-BMP;

BMP-L: $(\varepsilon,\infty)$-BMP
\\ \midrule
\ref{def: BMDP} &  BMDP & $(\varepsilon_L,\varepsilon_R)$ &
DP-like, posterior odds given $m_{-i}$ &
$
e^{-\varepsilon_L}
\le
\frac{\pi_i^c(1 \mid m_{-i},w) }{ \pi_i^c(0 \mid m_{-i},w)}
\le
e^{\varepsilon_R}
$
\\ 
- & BMDP-R, BMDP-L & $\varepsilon$ &
One-sided BMDP &
BMDP-R: $(\infty,\varepsilon)$-BMDP;

BMDP-L: $(\varepsilon,\infty)$-BMDP
\\
\hline
\end{tabular}
\caption{Summary of membership privacy definitions.}
\label{tab:privacy-definitions}
\renewcommand{\arraystretch}{1}
\setlength{\tabcolsep}{6pt}
\end{table*}

\subsection{Properties of BMP} \label{sec: Properties of Bayesian Membership Privacy}
We focus on BMP in Definition \ref{def: BMP} and study its properties. The proofs of the propositions are in Appendix \ref{sec: Missing Proofs}. First, we show that the BMP parameters $\varepsilon_{L}, \varepsilon_{R}$ are lower-bounded by the ratio of priors. Let $\eta_{i} := \gamma_{i}(1)/\gamma_{i}(0)$ and $\eta_{\max} := \max \{ \eta_{1}, \ldots, \eta_{N} \}$ and $\eta_{\min} = \min \{ \eta_{1},\ldots, \eta_{N}\}$.
\begin{proposition}[Bounds by membership prior] \label{prop: lower bound on BMP}
If $\mA $ is $(\varepsilon_{L}, \varepsilon_{R})$-BMP, $e^{-\varepsilon_{L}} \leq \eta_{i} \leq e^{\varepsilon_{R}}$ for all $i \in \{1, \ldots, N\}$. That is, $\varepsilon_{R} \geq  \log \eta_{\max} $ and $\varepsilon_{L} \geq - \log \eta_{\min}$.
\end{proposition}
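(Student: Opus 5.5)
The plan is to average the posterior odds against the null likelihood $f_i(\cdot\mid 0)$ and use the fact that this density integrates to one. Fix a node $i$. By Definition~\ref{def: BMP}, for every $w$ with $f_0(w)>0$ we have
\[
e^{-\varepsilon_L}\,\gamma_i(0) f_i(w\mid 0) \le \gamma_i(1) f_i(w\mid 1) \le e^{\varepsilon_R}\,\gamma_i(0) f_i(w\mid 0).
\]
I will integrate these inequalities over $w$ and compare the resulting masses.

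First I handle the support. On the set $\{w: f_0(w)=0\}$, both $\gamma_i(1) f_i(w\mid 1)$ and $\gamma_i(0) f_i(w\mid 0)$ vanish, because $f_0$ is their sum and each term is nonnegative. So whenever $\gamma_i(0)>0$ and $\gamma_i(1)>0$, the inequalities extend trivially to all of $\mathcal{W}$.

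Next I integrate against $\md w$. Since $f_i(\cdot\mid m)$ are probability densities whenever $\gamma_i(m)>0$, integration gives
\[
e^{-\varepsilon_L}\gamma_i(0) \le \gamma_i(1) \le e^{\varepsilon_R}\gamma_i(0).
\]
Dividing by $\gamma_i(0)$ yields $e^{-\varepsilon_L}\le \eta_i\le e^{\varepsilon_R}$. Taking the maximum and minimum over $i$ then gives $\varepsilon_R\ge\log\eta_{\max}$ and $\varepsilon_L\ge -\log\eta_{\min}$.

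An equivalent route is to note that $\mathbb{E}[\pi_i(1\mid W)]=\gamma_i(1)$, so the posterior ratio cannot lie uniformly above or below the prior ratio. The integration argument above is the direct form of this observation.

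The only delicate step is the degenerate case in which $\gamma_i(0)=0$ or $\gamma_i(1)=0$, where $\eta_i$ is $\infty$ or $0$. If $\gamma_i(0)=0$, then $\pi_i(0\mid w)=0$ wherever $f_0(w)>0$, so the posterior odds are infinite and no finite $\varepsilon_R$ satisfies BMP. The stated bound $\varepsilon_R\ge\log\eta_i=\infty$ therefore holds in the extended sense, and the case $\gamma_i(1)=0$ is symmetric. I will state this convention explicitly and then run the main argument under $0<\gamma_i(1)<1$.
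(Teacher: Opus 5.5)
Your proof is correct and is essentially the paper's argument. The paper uses $\mathbb{E}[\pi_i(1\mid W)]=\gamma_i(1)$ to find a witness $w$ with $\pi_i(1\mid w)\ge\gamma_i(1)$ and then uses monotonicity of the odds, while you integrate the cross-multiplied BMP inequality directly; this is the same averaging fact without the witness step, and your explicit handling of the set $\{f_0=0\}$ and of the degenerate priors $\gamma_i(0)=0$ or $\gamma_i(1)=0$ covers cases the paper's proof leaves implicit.
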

This proposition highlights the importance of the sampling mechanism in the BMP definition. Indeed, even a completely private training algorithm, for example, one that discards all input and outputs pure noise, cannot undo leakage that arises solely from the sampling process.

BMP satisfies the post-processing property, which makes it suitable for complex graph learning pipelines where model outputs might be reused (for example, for generating explanations) or transformed (for example, for ranking nodes).
\begin{proposition}[Post-processing] \label{prop: post-processing}
Let $\varphi: \mathcal{W} \to \mZ$ be a measurable function. If $\mA = (\mS, \Phi)$ is $(\varepsilon_{L}, \varepsilon_{R})$-BMP, then $\mathcal{B} =(\mS, \Phi')$ where $\Phi'= \varphi \circ \Phi$ is $(\varepsilon_{L}, \varepsilon_{R})$-BMP, too.
\end{proposition}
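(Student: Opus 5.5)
The plan is to express the posterior odds under $\mB$ as an average of the posterior odds under $\mA$, and then use that a weighted average of ratios lying in $[e^{-\varepsilon_L}, e^{\varepsilon_R}]$ stays in that interval. The membership vector $M_{1:N}$ is generated by $\mS$, which is shared by $\mA$ and $\mB$. Hence the priors $\gamma_i$ are identical for both algorithms. The output of $\mB$ is $Z = \varphi(W)$. The chain $M_{1:N} \to W \to Z$ is Markov, since $\varphi$ is deterministic (a randomized $\varphi$ would also work via a Markov kernel).

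The first step is to set up the densities of $Z$. Working with measures avoids assuming $Z$ has a density. For any measurable $B \subseteq \mZ$, the joint law satisfies
\[
\Pr(M_i = m, Z \in B) = \gamma_i(m) \int_{\varphi^{-1}(B)} f_i(w \mid m)\, \md w .
\]
Set $\nu_m(B) := \Pr(M_i = m, Z \in B)$ for $m \in \{0,1\}$. Both $\nu_1$ and $\nu_0$ are absolutely continuous with respect to $\nu_0 + \nu_1$, which is the law of $Z$. The posterior odds of $\mB$ at $z$ is therefore the Radon--Nikodym derivative $\frac{d\nu_1}{d\nu_0}(z)$, defined $\Pr_Z$-almost everywhere where the law of $Z$ has positive density; this plays the role of the $f_0 > 0$ set.

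The second step is the key inequality. By the BMP assumption on $\mA$, for every $w$ with $f_0(w) > 0$,
\[
\gamma_i(1) f_i(w \mid 1) \le e^{\varepsilon_R}\, \gamma_i(0) f_i(w \mid 0).
\]
Points with $f_0(w) = 0$ contribute nothing to either side. Integrating this inequality over $\varphi^{-1}(B)$ gives $\nu_1(B) \le e^{\varepsilon_R} \nu_0(B)$ for every measurable $B$. The same argument with the lower bound gives $\nu_1(B) \ge e^{-\varepsilon_L} \nu_0(B)$. A measure inequality holding on all sets implies the same pointwise bound on the density ratio almost everywhere, so
\[
e^{-\varepsilon_L} \le \frac{\pi^{\mB}_i(1 \mid z)}{\pi^{\mB}_i(0 \mid z)} \le e^{\varepsilon_R}.
\]
If $\mZ$ is discrete, or $Z$ has a density, this reads directly as a ratio of sums or integrals over $\varphi^{-1}(z)$. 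In that case the bound is the mediant inequality: if each ratio $a_k/b_k$ lies in an interval, so does $\sum_k a_k / \sum_k b_k$.

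The main obstacle is measure-theoretic rather than conceptual. Definition~\ref{def: BMP} is stated for every $w$ with $f_0(w) > 0$, but for a general $\varphi$ the posterior of $Z$ is defined only almost everywhere. I would handle this by choosing versions of the conditional densities of $Z$ so that the bound holds everywhere on the support, modifying them on a null set where necessary. If that is impossible, I would state the conclusion for almost every $z$, consistent with the density convention in Section~\ref{sec: Setting and the probabilistic model}.
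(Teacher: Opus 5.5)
Your proposal is correct and follows essentially the same route as the paper, which rewrites the BMP inequality in set form and uses $\{Z \in O_z\} = \{W \in \varphi^{-1}(O_z)\}$ to transfer the bound from $\mA$ to $\mB$. Your integration over $\varphi^{-1}(B)$, giving $e^{-\varepsilon_L}\nu_0(B) \le \nu_1(B) \le e^{\varepsilon_R}\nu_0(B)$, is exactly that step. You then pass explicitly from the set-wise bound back to a pointwise (almost-everywhere) bound by choosing versions, which is more careful than the paper, where the pointwise/set-wise equivalence is simply asserted.
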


A relation between MP and BMP can also be established.
\begin{proposition}[Relation between MP and BMP] \label{prop: MP to MPP}
If $\mA$ is $\varepsilon$-MP, then it is also $(\varepsilon_{L}, \varepsilon_{R})$-BMP with
\[
\varepsilon_{L} = \varepsilon -  \log \eta_{\min}, \quad \varepsilon_{R} = \varepsilon +  \log \eta_{\max}.
\]
Furthermore, if $\mA$ is $(\varepsilon_{L}, \varepsilon_{R})$-BMP, it is also $\varepsilon$-MP where 
\[
\varepsilon = \max\left\{\varepsilon_{L} + \log \eta_{\max},  \varepsilon_{R} - \log \eta_{\min} \right\}.
\]
\end{proposition}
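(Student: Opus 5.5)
The whole argument rests on one factorisation. Definition \ref{def: BMP} gives, for every node $i$ and every $w$ with $f_{0}(w)>0$,
\[
\frac{\pi_{i}(1\mid w)}{\pi_{i}(0\mid w)} = \eta_{i}\,\frac{f_{i}(w\mid 1)}{f_{i}(w\mid 0)},
\qquad \eta_{i} = \frac{\gamma_{i}(1)}{\gamma_{i}(0)}.
\]
So the posterior odds equal the prior odds times the likelihood ratio. Both directions of Proposition \ref{prop: MP to MPP} then follow by multiplying or dividing by $\eta_{i}$ and bounding $\eta_{i}$ with $\eta_{\min}$ and $\eta_{\max}$.

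\textbf{From MP to BMP.} Assume $\mA$ is $\varepsilon$-MP, so the likelihood ratio lies in $[e^{-\varepsilon}, e^{\varepsilon}]$.
\begin{itemize}
\item Multiplying by $\eta_{i}>0$ places the posterior odds in $[\eta_{i}e^{-\varepsilon},\, \eta_{i}e^{\varepsilon}]$.
\item Since $\eta_{\min}\le \eta_{i}\le \eta_{\max}$, this interval is contained in $[\eta_{\min}e^{-\varepsilon},\, \eta_{\max}e^{\varepsilon}]$.
\item The lower endpoint is $e^{-(\varepsilon-\log\eta_{\min})}$ and the upper endpoint is $e^{\varepsilon+\log\eta_{\max}}$.
\end{itemize}
This gives exactly the stated $\varepsilon_{L}$ and $\varepsilon_{R}$, uniformly over $i$ and $w$.

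\textbf{From BMP to MP.} Assume $\mA$ is $(\varepsilon_{L},\varepsilon_{R})$-BMP and divide the posterior odds by $\eta_{i}$.
\begin{itemize}
\item The likelihood ratio lies in $[e^{-\varepsilon_{L}}/\eta_{i},\, e^{\varepsilon_{R}}/\eta_{i}]$.
\item Using $1/\eta_{i}\ge 1/\eta_{\max}$ for the lower end and $1/\eta_{i}\le 1/\eta_{\min}$ for the upper end, the ratio lies in $[e^{-(\varepsilon_{L}+\log\eta_{\max})},\, e^{\varepsilon_{R}-\log\eta_{\min}}]$.
\item Taking $\varepsilon$ as the maximum of the two exponents $\varepsilon_{L}+\log\eta_{\max}$ and $\varepsilon_{R}-\log\eta_{\min}$ gives the symmetric $\varepsilon$-MP bound $[e^{-\varepsilon},e^{\varepsilon}]$.
\end{itemize}
By Proposition \ref{prop: lower bound on BMP}, both exponents are nonnegative, so this $\varepsilon$ is a valid nonnegative parameter.

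\textbf{Obstacles.} There is no substantive obstacle; the care needed is bookkeeping.
\begin{itemize}
\item Track the direction of each inequality when inverting $\eta_{i}$.
\item Treat degenerate priors separately. If some $\gamma_{i}(0)=0$ or $\gamma_{i}(1)=0$, then $\eta_{i}\in\{0,\infty\}$. The bounds hold trivially with infinite parameters, or one assumes $0<\gamma_{i}(1)<1$ for all $i$.
\item The condition $f_{0}(w)>0$ makes the posterior well defined. MP as stated still implicitly requires $f_{i}(w\mid 0)>0$; I take this as part of the shared support convention.
\end{itemize}
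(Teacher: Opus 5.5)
Your proposal is correct and follows the paper's proof essentially line for line. You write the posterior odds as $\eta_i$ times the likelihood ratio, bound $\eta_i$ (respectively $1/\eta_i$) using $\eta_{\min}$ and $\eta_{\max}$ in each direction, and take the larger of the two exponents to obtain the symmetric MP parameter. Your additional remarks are valid refinements that the paper does not spell out: the nonnegativity of $\varepsilon$ via Proposition~\ref{prop: lower bound on BMP}, and the handling of degenerate priors and supports.
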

\paragraph{Composing BMP with MP} 
A composition property exists for BMP when combined with MP, allowing privacy guarantees to accumulate in a controlled manner across multiple data accesses.
\begin{proposition}[Composition: BMP-MP]\label{prop: Composition with classical MP}
Let $\mA_{0} = (\mS_{0}, \Phi_{0})$ be an $(\varepsilon_{L}, \varepsilon_{R})$-BMP algorithm and for $t = 1, \ldots, T$, let each $\mA_{t} =(\mS_{t}, \Phi_{t})$ be an $\varepsilon_{t}$-MP algorithm. Then, an algorithm that samples $S \sim \mS_{0}$ and reveals $W_{t} = \Phi_{t}(\mG_{S}, X_{S}, Y_{S})$, $t = 0, \ldots, T$ is $(\varepsilon_{L}+\sum_{t = 1}^{T}\varepsilon_{t}, \varepsilon_{R} + \sum_{t = 1}^{T} \varepsilon_{t})$-BMP. 

Furthermore, this result holds when $W_{0}, \ldots, W_{T}$ run sequentially and $W_{t}$ depends on $W_{0:t-1}$ for $t = 1, \ldots, T$, provided that $\mathcal{A}_{t}$ is $\varepsilon_{t}$-MP for any $W_{0:t-1} = w_{0:t-1}$.
\end{proposition}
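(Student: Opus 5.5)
The plan is to write the posterior odds of $M_i$ given the whole transcript $W_{0:T} = w_{0:T}$ as a product: the BMP odds produced by $\mA_0$, times one likelihood ratio per subsequent release. By Bayes' rule, as in \eqref{eq: BMP inequality},
\[
\frac{\Pr(M_i = 1 \mid W_{0:T} = w_{0:T})}{\Pr(M_i = 0 \mid W_{0:T} = w_{0:T})} = \frac{\gamma_i(1)\, f_i(w_{0:T}\mid 1)}{\gamma_i(0)\, f_i(w_{0:T}\mid 0)},
\]
where $f_i(w_{0:T}\mid m)$ is the joint density of the transcript given $M_i = m$. Here $S \sim \mS_0$ throughout, so the priors $\gamma_i$ are those of $\mS_0$.

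Applying the chain rule separately for $m \in \{0,1\}$ gives
\[
f_i(w_{0:T}\mid m) = f_i(w_0\mid m)\prod_{t=1}^T f_i(w_t \mid w_{0:t-1}, m),
\]
and the odds split accordingly into
\[
\frac{\pi_i(1\mid w_0)}{\pi_i(0\mid w_0)} \prod_{t=1}^{T} \frac{f_i(w_t\mid w_{0:t-1},1)}{f_i(w_t\mid w_{0:t-1},0)}.
\]
The first factor lies in $[e^{-\varepsilon_L}, e^{\varepsilon_R}]$ because $\mA_0$ is $(\varepsilon_L,\varepsilon_R)$-BMP. Each factor in the product lies in $[e^{-\varepsilon_t}, e^{\varepsilon_t}]$ by the $\varepsilon_t$-MP property of $\mA_t$, i.e.\ Definition~\ref{def: MP} applied to the conditional law of $W_t$ given the past. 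Multiplying these bounds yields $(\varepsilon_L+\sum_t \varepsilon_t,\ \varepsilon_R+\sum_t\varepsilon_t)$-BMP. Only transcripts with $f_0(w_{0:T})>0$ matter, and on that set every conditional density used above is well defined.

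The sequential (adaptive) case is the same computation. The only change is that $\Phi_t$ is allowed to depend on $w_{0:t-1}$, which the hypothesis ``$\mA_t$ is $\varepsilon_t$-MP for any $W_{0:t-1}=w_{0:t-1}$'' covers directly.

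The main obstacle is justifying the per-step bound, namely that $f_i(w_t\mid w_{0:t-1},m)$ is the likelihood to which $\mA_t$'s MP guarantee applies. The difficulty is that
\[
f_i(w_t \mid w_{0:t-1}, m) = \sum_{m_{-i}} f_t(w_t\mid m_{1:N}, w_{0:t-1})\,\Pr(M_{-i}=m_{-i}\mid M_i = m, W_{0:t-1}=w_{0:t-1}).
\]
The mixing weights are the sampling distribution $\mS_0$ updated by the earlier outputs, not the law of $\mS_t$ alone, and they differ between $m=1$ and $m=0$. Even in the non-adaptive case, $W_t$ and $W_0$ are dependent given $M_i$ through $M_{-i}$.

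I would handle this by reading the MP hypothesis on $\mA_t$ as being stated with respect to its conditional membership law given $W_{0:t-1}=w_{0:t-1}$. This is exactly how the proposition phrases the adaptive condition, and I would make that reading explicit in the proof, since the stated bound does not follow without it. I would also remark that the step holds unconditionally if $\mA_t$ satisfies MDP. In that case each $f_t(\cdot\mid m_{1:N},w_{0:t-1})$ is within $e^{\pm\varepsilon_t}$ of its neighbour with $m_i$ flipped, but the two mixtures still use different weights. So even then I would need to check that the conditional independence structure makes the weights coincide, and if it does not I would rely on the explicit conditional reading of MP.
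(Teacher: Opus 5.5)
Your argument follows the paper's. The paper also writes the posterior odds given $w_{0:T}$ as the $\mA_0$ odds times one likelihood ratio per later release. In the adaptive case it uses exactly your conditional densities $f_i(w_t \mid w_{0:t-1}, m)$, bounded by invoking ``$\varepsilon_t$-MP for any $w_{0:t-1}$''.

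Under your explicit conditional reading of that hypothesis, the chain-rule proof is complete, and your worry about the per-step bound is justified. In the non-adaptive case the paper writes the joint likelihood as $\prod_{t} f^{(t)}_{i}(w_t \mid m)$. That treats the releases as conditionally independent given $M_i$ alone, which fails because they share $M_{-i}$.

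With MP read literally as in Definition~\ref{def: MP} (likelihoods mixed over $M_{-i}$ under $\mS_t$), the statement is false. Take $N=3$, memberships i.i.d.\ Bernoulli$(1/2)$ under $\mS_0=\mS_1$, $W_0=M_1\oplus M_2\oplus M_3$ and $W_1=M_2\oplus M_3$. Each release alone leaves every node's posterior odds equal to $1$, so $\mA_0$ is $(0,0)$-BMP and $\mA_1$ is $0$-MP. Yet $W_0\oplus W_1=M_1$, so the composed odds for node $1$ are $0$ or $\infty$ instead of $1$. Your conditional reading correctly excludes this example: given $W_0=w_0$ and $M_1=m$, $W_1=w_0\oplus m$ deterministically. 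So state the conditional reading as the hypothesis, not as an interpretation of the paper's wording.

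Drop the side remark that the step ``holds unconditionally'' under MDP. Your own hedge right after it is correct, and the answer is negative. Take $N=2$, i.i.d.\ Bernoulli$(1/2)$ memberships, and $W_0=M_1\oplus M_2$, which is $(0,0)$-BMP. Let $W_1=M_1+M_2+L$ with $L$ Laplace of scale $1/\varepsilon_1$, which is $\varepsilon_1$-MDP and hence $\varepsilon_1$-MP here. The event $\{W_0=0\}$ forces $M_1=M_2$. For $W_1=w\ge 2$ the posterior odds of $M_1$ then equal $e^{\varepsilon_1(|w|-|w-2|)}=e^{2\varepsilon_1}>e^{0+\varepsilon_1}$. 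So both your per-step bound and the conclusion fail.

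If you want hypotheses stated on each mechanism separately, one correct variant is this: $\mA_0$ is $(\varepsilon_L,\varepsilon_R)$-BMDP and each $\mA_t$ is $\varepsilon_t$-MDP for every fixed past. Then the odds given $(m_{-i},w_{0:T})$ factor exactly as the $\mA_0$ conditional odds times the MDP ratios, with no mixing weights. The mediant inequality then passes the resulting bound to the BMP odds. BMDP is restrictive under graph sampling, as the paper notes.
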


Composing BMP with MP corresponds to scenarios where (i) the training data is sampled once, followed by (ii) multiple private operations on the sampled training data. To give an example, consider running multiple $T+1$ epochs of SGD on the sampled training data. We can view this as a composition of the BMP algorithm (sample $S \sim \mathcal{S}$ and run one epoch of noisy SGD), followed by $T$ (MP) algorithms ($T$ more epochs of noisy SGD).  As another example, consider running several private algorithms (such as different variants of GNN algorithms) on the same sampled data. The sampling step can be incorporated into the analysis of the first released algorithm, yielding a BMP-type bound for the combined sample-and-train mechanism. However, once the sampled dataset is reused for additional algorithmic releases, those releases should be treated as further analyses of the same sample and their contribution should be accounted for through their MP bounds under composition.

\paragraph{Composing BMP with BMP} 
Composing multiple (say, $T$) BMP mechanisms is conceptually different. That case corresponds to a case where one performs (sampling, training) $T$ times. Then, from the adversary’s point of view, the question of interest may be whether an individual’s data was used in \emph{any} of those samples. It is possible to provide a composition result for this case, too. 

\begin{proposition}[Composition: BMP-BMP] \label{prop: Composition of BMPs}
Assume that algorithm $\mathcal{A}_{t} = (\mS_{t}, \Phi_{t} )$ is $(\varepsilon_{L}^{(t)}, \varepsilon_{R}^{(t)})$-BMP, for $t = 1, \ldots, T$, and the sampling operations are independent. Define the composition algorithm $\mathcal{A}: (\mS, \Phi)$, where $\mS =\bigotimes_{t=1}^{T} \mS_{t}$ is the product law and 
\[
\Phi(S_{1:T}) = (\Phi_{1}(\mG_{S_{1}}, \bX_{S_{1}}, \bY_{S_{1}}), \ldots, \Phi_T(\mG_{S_{T}}, \bX_{S_{T}}, \bY_{S_{T}}))
\]
applies $\Phi_{t}$ on the sampled dataset according to $S_{t} \sim \mS_{t}$. Then $\mathcal{A}$ is $(\bar{\varepsilon}_{L}, \bar{\varepsilon}_{R})$ BMP, where
\begin{align*}
\bar{\varepsilon}_{L} &= \sum_{t = 1}^{T} \log \mu_{t} - \log \left[1 - \prod_{t = 1}^{T} \mu_{t} \right] \\
\bar{\varepsilon}_{R} &= \log \left[1 - \prod_{t = 1}^{T} (1-\kappa_{t}) \right] - \sum_{t = 1}^{T} \log (1-\kappa_{t}),
\end{align*}
where $\mu_{t} = e^{\varepsilon_{L}^{(t)}}/(1 + e^{\varepsilon_{L}^{(t)}})$ and $\kappa_{t} = e^{\varepsilon_{R}^{(t)}}/(1 + e^{\varepsilon_{R}^{(t)}})$.
\end{proposition}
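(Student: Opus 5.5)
The plan is to make precise what membership means for the composed algorithm and then reduce everything to a monotone function of a product of per-mechanism posteriors. In line with the discussion preceding the statement, node $i$ counts as a member of the composition if it was used in \emph{any} of the samples. Write $M_i^{(t)} := \mathbb{I}(i \in S_t)$ and $M_i := \max_t M_i^{(t)}$, so that $\{M_i = 0\} = \bigcap_{t} \{M_i^{(t)} = 0\}$. Let $W_t := \Phi_t(\mG_{S_t}, \bX_{S_t}, \bY_{S_t})$. The $S_t$ are independent under the product law $\mS$, and each $W_t$ depends only on $S_t$ and the internal randomness of $\Phi_t$, which I take to be independent across $t$. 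Hence the pairs $(M_i^{(t)}, W_t)$, $t = 1, \ldots, T$, are mutually independent. The first step uses this independence to show that the joint density of $W_{1:T}$ given $\{M_i^{(t)} = m_t\}_{t}$ factorizes, and therefore that the non-membership posterior factorizes:
\[
\Pr(M_i = 0 \mid W_{1:T} = w_{1:T}) = \prod_{t=1}^{T} \pi_i^{(t)}(0 \mid w_t),
\]
where $\pi_i^{(t)}$ is the marginal posterior of $\mA_t$ from Definition~\ref{def: BMP}. This holds whenever the joint marginal density is positive, which forces each $f_0^{(t)}(w_t) > 0$.

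Second, I translate each mechanism's BMP guarantee into an interval for its posterior. Set $p_t := \pi_i^{(t)}(1 \mid w_t)$. The condition $e^{-\varepsilon_L^{(t)}} \le p_t/(1-p_t) \le e^{\varepsilon_R^{(t)}}$ is equivalent to
\[
\frac{1}{1+e^{\varepsilon_L^{(t)}}} \le p_t \le \kappa_t.
\]
Since $1 - 1/(1+e^{\varepsilon_L^{(t)}}) = \mu_t$, this gives $1 - p_t \in [1-\kappa_t, \mu_t]$. Taking products, $q := \prod_t (1-p_t) \in \bigl[\prod_t (1-\kappa_t), \prod_t \mu_t\bigr]$.

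Third, the composed posterior odds equal $(1-q)/q$, which is decreasing in $q$. The largest possible odds occur at $q = \prod_t (1-\kappa_t)$, and taking logarithms gives exactly $\bar{\varepsilon}_R$. The smallest possible odds occur at $q = \prod_t \mu_t$, giving odds $(1-\prod_t\mu_t)/\prod_t \mu_t$. The bound $e^{-\bar{\varepsilon}_L}$ with $\bar{\varepsilon}_L = \sum_t \log\mu_t - \log(1-\prod_t\mu_t)$ then follows, which is the claim.

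The main obstacle is the first step: justifying the factorization carefully, namely that conditioning on all of $W_{1:T}$ does not couple the different $M_i^{(t)}$. I would handle this by writing the joint law of $(M_i^{(1:T)}, W_{1:T})$ as a product over $t$, using the product structure of $\mS$ and the fact that each $\Phi_t$ acts only on $S_t$, and then applying Bayes' rule. Once that is in place, the remaining steps are elementary monotonicity arguments. Edge cases where some $\varepsilon^{(t)} = \infty$ are handled by the usual conventions $\mu_t = 1$ or $\kappa_t = 1$.
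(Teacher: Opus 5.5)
Your proposal is correct and follows essentially the same route as the paper's proof: treat membership in the composition as membership in at least one $S_t$, factor $\Pr(M_i^{(t)}=0 \text{ for all } t \mid W_{1:T}=w_{1:T})$ into per-mechanism non-membership posteriors, bound each factor in $[1-\kappa_t,\mu_t]$, and turn these into bounds on the posterior odds by monotonicity. Your version is more careful than the paper's sketch, because you actually justify the factorization and you state explicitly an assumption the paper uses only implicitly: that the internal randomness of the $\Phi_t$ is independent across $t$.
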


\subsubsection{Statistical Guarantees of BMP} \label{sec: Statistical Guarantees of BMP}
In this section, we explain the statistical guarantees of BMP in Definition \ref{def: BMP}, in terms of the performance of any MIA that uses the output of $\mA$ as its input. First, we formalize an MIA related to $\mA$.

\begin{definition}[MIA -- Hypothesis Testing Perspective]
\label{def: MIA}
Let $\mG = (\mV, \mE)$ be a graph with node feature matrix $\mathbf{X}$ and label matrix $\mathbf{Y}$. Let $\mA = (\mS, \Phi)$ be a (possibly randomized) algorithm operating on $(\mG, \mathbf{X}, \mathbf{Y})$, producing an output $\hat{Y} \in \mathcal{W}$. For $i \in \{1, \ldots, N \}$, a MIA, denoted as $\mathcal{M}(i, D, \mA)$, is a hypothesis test for the hypotheses
\[
H_{0}: M_{i} = 0, \quad H_{1}: M_{i} = 1,
\]
whose decision rule $D: \mathcal{W} \mapsto \{0, 1\}$ maps the output of $\mA$ to a binary decision in $\{0, 1\}$. 
\end{definition}
Define the type-I and type-II errors of a MIA $\mathcal{M}(i, D, \mA)$,
\begin{align*}
\alpha_{i} &= \Pr(D(W) = 1 | M_{i} = 0), \\
\beta_{i} &= \Pr(D(W) = 0 | M_{i} = 1).
\end{align*}
We have the following crucial result for $(\alpha_{i}, \beta_{i})$.
\begin{theorem}[BMP and error probabilities of MIA] \label{thm: R for BMP}
$\mA$ is $(\varepsilon_{L}, \varepsilon_{R})$-BMP if and only if, for any decision rule $D$ and any $i \in \{1, \ldots, N\}$, the MIA $\mathcal{M}(i, D, \mA)$ satisfies $(\alpha_{i}, \beta_{i}) \in \mR_{\eta_{i}}(\varepsilon_{L}, \varepsilon_{R})$, where for $\varepsilon_{L}, \varepsilon_{R}, \eta \geq 0$,
\begin{align} \label{eq: R for BMP}
\mR_{\eta}(\varepsilon_{L}, \varepsilon_{R}) = \left\{ (\alpha, \beta):\begin{matrix}  \alpha, \beta \in [0, 1] \\  \alpha + \eta e^{\varepsilon_{L}} \beta \in [1, \eta e^{\varepsilon_{L}}] \\ \alpha + \eta e^{-\varepsilon_{R}}  \beta \in [\eta e^{-\varepsilon_{R}}, 1]
\end{matrix} \right\}
\end{align}
\end{theorem}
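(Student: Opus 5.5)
The plan is to rewrite the BMP condition for a fixed node $i$ as a two-sided bound on the likelihood ratio $f_i(w\mid 1)/f_i(w\mid 0)$. We then use the standard argument that bounded likelihood ratios are equivalent to linear constraints on the errors of every test. Since $\pi_i(1\mid w)/\pi_i(0\mid w) = \eta_i f_i(w\mid 1)/f_i(w\mid 0)$, inequality \eqref{eq: BMP inequality} for node $i$ reads
\[
\tfrac{e^{-\varepsilon_L}}{\eta_i}\, f_i(w\mid 0) \le f_i(w\mid 1) \le \tfrac{e^{\varepsilon_R}}{\eta_i}\, f_i(w\mid 0)
\]
for all $w$ with $f_0(w)>0$. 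Writing this multiplicatively avoids dividing by zero. For any decision rule $D$, let $O=\{w: D(w)=1\}$. Then
\[
\alpha_i=\int_O f_i(w\mid 0)\,\md w, \qquad 1-\beta_i=\int_O f_i(w\mid 1)\,\md w,
\]
and the analogous integrals over $O^c$ give $1-\alpha_i$ and $\beta_i$.

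\emph{Necessity (BMP $\Rightarrow$ region).} Integrate the upper bound over $O$ to get $1-\beta_i \le (e^{\varepsilon_R}/\eta_i)\alpha_i$, which is $\alpha_i+\eta_i e^{-\varepsilon_R}\beta_i \ge \eta_i e^{-\varepsilon_R}$. Integrate it over $O^c$ to get $\beta_i \le (e^{\varepsilon_R}/\eta_i)(1-\alpha_i)$, which is $\alpha_i+\eta_i e^{-\varepsilon_R}\beta_i\le 1$. Similarly, the lower bound over $O$ gives $\alpha_i+\eta_i e^{\varepsilon_L}\beta_i \le \eta_i e^{\varepsilon_L}$, and over $O^c$ it gives $\alpha_i+\eta_i e^{\varepsilon_L}\beta_i\ge 1$. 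Points with $f_0(w)=0$ contribute nothing to either integral, so restricting the pointwise bounds to $\{f_0>0\}$ costs nothing. Together with $\alpha_i,\beta_i\in[0,1]$, this gives $(\alpha_i,\beta_i)\in\mR_{\eta_i}(\varepsilon_L,\varepsilon_R)$.

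\emph{Sufficiency (region $\Rightarrow$ BMP).} Argue by contraposition using a likelihood-ratio test. Suppose the upper BMP bound fails for node $i$, and set
\[
O=\{w: f_0(w)>0,\ \eta_i f_i(w\mid 1) > e^{\varepsilon_R} f_i(w\mid 0)\}.
\]
This set includes the points where $f_i(w\mid 0)=0<f_i(w\mid 1)$. Take $D=\mathbb{I}_O$. On $O$ the strict inequality integrates to $1-\beta_i > (e^{\varepsilon_R}/\eta_i)\alpha_i$, provided $\int_O f_i(w\mid 1)\,\md w>0$. This contradicts the constraint $\alpha_i+\eta_i e^{-\varepsilon_R}\beta_i\ge \eta_i e^{-\varepsilon_R}$. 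The case $\alpha_i=0$ is covered, since then $\beta_i<1$ already gives the violation. The lower bound is handled symmetrically with $O=\{\eta_i f_i(w\mid 1) < e^{-\varepsilon_L} f_i(w\mid 0)\}$, violating $\alpha_i+\eta_i e^{\varepsilon_L}\beta_i\le\eta_i e^{\varepsilon_L}$. Only two of the four inequalities are needed for the converse; the other two follow by swapping $D$ with $1-D$.

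The main obstacle is measure-theoretic. Densities are defined only up to null sets, so the test-based condition can only force the ratio bound $\md w$-almost everywhere on $\{f_0>0\}$, whereas Definition~\ref{def: BMP} is stated pointwise. The failure set $O$ may be nonempty but have zero probability. I would resolve this by adopting the usual convention that BMP is required for a version of the densities, i.e.\ almost everywhere. Equivalently, one can modify $f_i(\cdot\mid m)$ on a null set, which changes no error probability. The remaining steps are routine integrations.
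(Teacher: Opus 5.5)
Your proof is correct and follows essentially the paper's route. Integrating the multiplicative BMP bounds over the critical region $C=\{D=1\}$ and over its complement is exactly how the paper obtains the four linear constraints defining $\mR_{\eta_i}(\varepsilon_L,\varepsilon_R)$, and your converse is the reversal of those steps. Your explicit violating-set construction and the almost-everywhere caveat make rigorous what the paper compresses into its unproved ``useful fact'' that the pointwise and set-wise forms of BMP are equivalent; that equivalence indeed holds only up to null sets.
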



\begin{figure}
\centering
\begin{subfigure}[t]{0.45\linewidth} 
\centering
\BMPregion[3]{1}{0.5}{0.5}
\caption{$\mR_{1}(0.5, 0.5)$}
\end{subfigure}
\hfill
\begin{subfigure}[t]{0.45\linewidth} 
\centering
\BMPregion[3]{4}{0.5}{2}
\caption{$\mR_{4}(0.5, 2)$}
\end{subfigure}
\hfill
\begin{subfigure}[t]{0.45\linewidth} 
\centering
\BMPregionInfR[3]{0.25}{2}
\caption{$\mR_{1/4}(2, \infty)$}
\end{subfigure}
\hfill
\begin{subfigure}[t]{0.45\linewidth} 
\centering
\BMPregionInfL[3]{4}{2}
\caption{$\mR_{4}(\infty, 2)$}
\end{subfigure}
\caption{The region $\mR_{\eta}(\varepsilon_{L}, \varepsilon_{R})$ for different choices.}
\label{fig: R}
\end{figure}


See Figure \ref{fig: R} for an illustration of $\mR_{\eta}(\varepsilon_{L}, \varepsilon_{R})$ for different choices of $\varepsilon_{L}, \varepsilon_{R}, \eta$ and note the allowed asymmetry in the region. This prior-unaware and symmetric version $\mR_{1}(\varepsilon, \varepsilon)$ is well known \citep[Theorem 2.1]{Kairouz_et_al_2017} and has been used for empirical privacy estimation \citep{Jagielski_et_al_2020, Nasr_et_al_2021, Zanella-Beguelin_et_al_2023, Yildirim_et_al_2025}. The definition in \eqref{eq: R for BMP} generalizes $\mR_{\eta}(\varepsilon_{L}, \varepsilon_{R})$ significantly. Even more importantly, the one-to-one relation in Theorem \ref{thm: R for BMP} will be essential for the method in Section \ref{sec: Estimating privacy parameters from attack results} for empirical estimation of the privacy parameters $\varepsilon_{L}, \varepsilon_{R}$ of BMP. Note that, differently from the prior-unaware classical DP-related definitions, the allowed region of type-I and type-II error probabilities \emph{vary} across $i \in \{1, \ldots, N\}$ according to their prior membership probabilities. 

By the Neyman-Pearson lemma, the likelihood ratio test is known as the best test in terms of type I and type II errors. However, in the Bayesian framework, type I and type II errors are not sufficient alone to characterize the capability of the adversary; one can gain additional insight into the adversary's capability by studying the expected cost of a wrong decision and the probability of a wrong decision as a special case.  For $j = 0, 1$, let $c_{j}$ be the cost of an error under $H_{j}: M_{i} = j$. Then, the overall cost can be defined as 
\begin{align*}
\text{Cost} = c_{0}(1-M_{i}) D(W) +c_{1} M_{i} (1-D(W)).
\end{align*}
The next theorem provides a lower bound on the expected cost of error and, as a special case, the probability of a wrong decision, $D(W) \neq M_{i}$.

\begin{theorem}[BMP and Expected Cost of Error of MIA] \label{thm: expected cost and wrong decision prob}
If $\mA$ is $(\varepsilon_{L}, \varepsilon_{R})$-BMP, then for any decision rule $D$ and any $i \in \{1, \ldots, N\}$, expected cost of the MIA $\mathcal{M}(i, D, \mA)$ is lower-bounded by
\begin{align} \label{eq: lower bound for cost}
 &\mathbb{E}(\textup{Cost})  \geq \max\bigg\{ \min\left( \frac{c_{0}}{1 + e^{\varepsilon_{R}}},   \frac{c_{1}}{1 + e^{\varepsilon_{L}}} \right),  \nonumber\\ 
&\quad\quad \gamma_{i}(0) \min\left(\frac{c_{1}}{ e^{\varepsilon_{L}}},c_{0}\right) , \gamma_{i}(1) \min\left(\frac{c_{0} }{e^{\varepsilon_{R}}}, c_{1} \right) \bigg\}.
\end{align}
In particular, the wrong decision probability is bounded as
\begin{align} \label{eq: lower bound for wrong decision probability}
\Pr(M_{i} \neq D) \geq  \max\left\{ \frac{1}{1 + e^{ \max\{\varepsilon_{L}, \varepsilon_{R}\} }},
\frac{\gamma_{i}(0) }{e^{\varepsilon_{L}}},  \frac{\gamma_{i}(1) }{e^{\varepsilon_{R}}} \right\}.
\end{align}
\end{theorem}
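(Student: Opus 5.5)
The plan is to write the expected cost in terms of the posterior and the two conditional densities, then bound it in two different ways. The first way gives the prior-free term in \eqref{eq: lower bound for cost}. The second gives the two prior-weighted terms. Since each bound holds separately, their maximum is also a lower bound. Throughout, abbreviate $\pi_i(m)=\pi_i(m\mid W)$.

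\emph{First term (pointwise posterior bound).} By conditioning on $W$ and using that $D(W)$ is $W$-measurable,
\[
\mathbb{E}(\mathrm{Cost})=\mathbb{E}\big[c_0\,\pi_i(0)\,D(W)+c_1\,\pi_i(1)\,(1-D(W))\big]\ge \mathbb{E}\big[\min\{c_0\pi_i(0),\,c_1\pi_i(1)\}\big].
\]
On $\{f_0(W)>0\}$, which has probability one, BMP gives $\pi_i(1)/\pi_i(0)\le e^{\varepsilon_R}$. Combined with $\pi_i(0)+\pi_i(1)=1$, this yields $\pi_i(0)\ge 1/(1+e^{\varepsilon_R})$. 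Symmetrically, $\pi_i(0)/\pi_i(1)\le e^{\varepsilon_L}$ yields $\pi_i(1)\ge 1/(1+e^{\varepsilon_L})$. Substituting these bounds inside the minimum gives the first term, deterministically.

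\emph{Second and third terms (type-I/II decomposition).} Write
\[
\mathbb{E}(\mathrm{Cost})=c_0\gamma_i(0)\alpha_i+c_1\gamma_i(1)\beta_i .
\]
The left inequality of \eqref{eq: BMP inequality} states, pointwise, $\gamma_i(1)f_i(w\mid 1)\ge e^{-\varepsilon_L}\gamma_i(0)f_i(w\mid 0)$. Integrating this over the region $\{D=0\}$ gives
\[
\gamma_i(1)\beta_i\ge e^{-\varepsilon_L}\gamma_i(0)(1-\alpha_i).
\]
Hence
\[
\mathbb{E}(\mathrm{Cost})\ge \gamma_i(0)\big[c_0\alpha_i+c_1e^{-\varepsilon_L}(1-\alpha_i)\big].
\]
The bracket is a convex combination of $c_0$ and $c_1 e^{-\varepsilon_L}$, so it is at least $\min(c_1e^{-\varepsilon_L},c_0)$. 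For the third term, repeat the argument with the right inequality integrated over $\{D=1\}$. This gives $\gamma_i(0)\alpha_i\ge e^{-\varepsilon_R}\gamma_i(1)(1-\beta_i)$, and hence $\mathbb{E}(\mathrm{Cost})\ge\gamma_i(1)\min(c_0e^{-\varepsilon_R},c_1)$. Equivalently, these inequalities are exactly the constraints of $\mR_{\eta_i}$ in Theorem \ref{thm: R for BMP}, which could be invoked instead.

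\emph{Wrong-decision probability.} Set $c_0=c_1=1$, so that $\mathbb{E}(\mathrm{Cost})=\Pr(M_i\neq D)$. The first term becomes $\min\{1/(1+e^{\varepsilon_R}),1/(1+e^{\varepsilon_L})\}=1/(1+e^{\max\{\varepsilon_L,\varepsilon_R\}})$. Since $\varepsilon_L,\varepsilon_R\ge 0$, we have $\min(e^{-\varepsilon_L},1)=e^{-\varepsilon_L}$ and likewise $\min(e^{-\varepsilon_R},1)=e^{-\varepsilon_R}$, which gives \eqref{eq: lower bound for wrong decision probability}.

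There is no real obstacle. The only care needed is measure-theoretic: restrict to $f_0(w)>0$ when applying the BMP inequality, and note that $\gamma_i(m)f_i(w\mid m)=\pi_i(m\mid w)f_0(w)$, so the integrated inequalities are justified.
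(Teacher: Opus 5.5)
Your proof is correct. For the prior-free term it matches the paper: both bound the cost of any rule below by the Bayes-optimal cost $\mathbb{E}[\min\{c_0\pi_i(0\mid W),c_1\pi_i(1\mid W)\}]$ and then bound the posteriors pointwise.

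For the two prior-weighted terms you take a different route. The paper stays with the Bayes-optimal cost and factors out $\pi_i(1\mid W)$ (resp.\ $\pi_i(0\mid W)$). It then bounds the remaining posterior odds by $e^{-\varepsilon_R}$ (resp.\ $e^{-\varepsilon_L}$) and brings in the prior through $\mathbb{E}[\pi_i(1\mid W)]=\gamma_i(1)$. You never pass to the optimal rule. You write the cost of an arbitrary $D$ as $c_0\gamma_i(0)\alpha_i+c_1\gamma_i(1)\beta_i$ and integrate the density form of the BMP inequality over one decision region, which gives a single linear constraint on $(\alpha_i,\beta_i)$. A convex-combination argument then finishes each term.

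The paper's version is shorter and stays in the posterior language of Definition~\ref{def: BMP}. Yours shows that these two terms each use only one lower constraint of $\mathcal{R}_{\eta_i}(\varepsilon_L,\varepsilon_R)$, namely $\alpha+\eta_i e^{\varepsilon_L}\beta\ge 1$ and $\alpha+\eta_i e^{-\varepsilon_R}\beta\ge \eta_i e^{-\varepsilon_R}$, each on its own. That framing also points to a sharper bound. By Theorem~\ref{thm: R for BMP}, the minimum of $c_0\gamma_i(0)\alpha+c_1\gamma_i(1)\beta$ over the whole polygon is a valid lower bound, and it can strictly exceed the maximum of the three terms. For instance, with $\eta_i=1$, $\varepsilon_L=0.5$, $\varepsilon_R=2$ and $c_0=c_1=1$, it is about $0.315$ against $0.303$.

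Finally, your explicit use of $\varepsilon_L,\varepsilon_R\ge 0$ in the last step is genuinely needed, and the paper's proof does not mention it. Proposition~\ref{prop: lower bound on BMP} allows $\varepsilon_L<0$ when $\eta_{\min}>1$. Take a pure-noise output with every node included independently with probability $0.9$. It is $(-\log 9,\log 9)$-BMP, and the rule $D\equiv 1$ errs with probability $0.1$, whereas $\gamma_i(0)/e^{\varepsilon_L}=0.9$. The general cost bound, which keeps $\min(c_1e^{-\varepsilon_L},c_0)$, is unaffected.
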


We provide a detailed comparison of BMP with related definitions in Appendix \ref{sec: Related definitions}.

\section{Estimating privacy parameters} \label{sec: Estimating privacy parameters from attack results}

We now describe the Bayesian estimation method that estimates $(\varepsilon_{L}, \varepsilon_{R})$ of BMP for $\mA$ based on the results of a set of MIAs. We adopt a Bayesian approach, as proposed in \citep{Zanella-Beguelin_et_al_2023, Yildirim_et_al_2025} and adapt it to our new privacy definition. In Section \ref{sec: Joint probability model and Bayesian estimation via MCMC}, we present the joint probabilistic model for the random variables involved in a set of MIA experiments and their estimation using an MCMC approach. In Section \ref{sec: Designing MIAs and measuring their performance}, we design a practical MIA for GNNs and provide a computationally efficient method to assess its performance.

\subsection{Joint Probabilistic model} \label{sec: Joint probability model and Bayesian estimation via MCMC}
Suppose we have MIA results for a set of nodes $\mathcal{T} = \{\nu_{1}, \ldots, \nu_{K} \} \subseteq \{1, \ldots, N\}$ are chosen where $K = |\mathcal{T}|$. For each $k$, let $(\alpha_{k}, \beta_{k})$ be the type-I and type-II errors of the designed MIA, $\mathcal{M}(\nu_{k}, D, \mA)$. By Theorem \ref{thm: R for BMP}, 
\[
 \text{$\mA$ is $(\varepsilon_{L}, \varepsilon_{R})$-BMP}  \Leftrightarrow (\alpha_{k}, \beta_{k}) \in \mR_{\eta_{\nu_{k}}}(\varepsilon_{L}, \varepsilon_{R}).
\] 
Using this relation, we can model the probability distribution of $(\alpha_{k}, \beta_{k})$ as
\begin{equation} \label{eq: uniform over R}
(\alpha_{k}, \beta_{k}) | \varepsilon_{L}, \varepsilon_{R} \sim \text{Uniform}\left(\mR_{\eta_{\nu_{k}}}(\varepsilon_{L}, \varepsilon_{R})\right)
\end{equation}
The area of $\mR_{\eta_{\nu_{k}}}(\varepsilon_{L}, \varepsilon_{R})$ can be calculated with geometric manipulations; see Figure \ref{fig: R}. For each node $k$, assume that MIA is applied $n_{0, k}$ times under $H_{0}: M_{\nu_{k}} = 0$ and $n_{1, k}$ times under $H_{1}: M_{\nu_{k}} = 1$, and we obtain the respective counts $A_{k}, B_{k}$ of type I and type II errors. The conditional distribution of $A_{k}, B_{k}$ can generically be shown to have the probability distribution
\begin{equation} \label{eq: conditional distribution of errors}
g_{\theta}(a_{k}, b_{k} | \alpha_{k}, \beta_{k}) = \Pr(A_{k} = a_{k}, B_{k} = b_{k} | \alpha_{k}, \beta_{k})
\end{equation}
where $\theta$ contains whatever (if any) parameters are used to construct $g_{\theta}$, which may be assumed unknown and estimated together with $(\varepsilon_{L}, \varepsilon_{R})$, and we have suppressed $n_{0, k}, n_{1, k}$ for the sake of notational simplicity.

A simple choice is to model $A_{k}, B_{k}$ as independent Binomial random variables. Alternatively, provided that $n_{0, j}$ and $n_{1, j}$ are both large enough, a bivariate normal approximation can be applied to also account for the correlation between $A_{k}, B_{k}$, see \citep{Yildirim_et_al_2025} for details. Combining \eqref{eq: uniform over R} and \eqref{eq: conditional distribution of errors}, and assuming an uninformative prior $p(\varepsilon_{L}, \varepsilon_{R})$ on $(\varepsilon_{L}, \varepsilon_{R})$, and $p(\theta)$ on the parameters $\theta$ of the error count probabilities,  we have the joint probability distribution of $(\varepsilon_{L}, \varepsilon_{R}, \alpha_{1:K}, \beta_{1:K}, A_{1:K}, B_{1:K})$
\begin{align}
& p (\varepsilon_{L}, \varepsilon_{R}, \theta, \alpha_{1:K}, \beta_{1:K}, a_{1:K}, b_{1:K}) = p(\varepsilon_{L}, \varepsilon_{R}) p(\theta)  \nonumber  \\ 
& \cdot \prod_{k = 1}^{K} \frac{\mathbb{I}((\alpha_{k}, \beta_{k}) \in \mR_{\eta_{\nu_{k}}}(\varepsilon_{L}, \varepsilon_{R})}{|\mR_{\eta_{\nu_{k}}}(\varepsilon_{L}, \varepsilon_{R})|}  g_{\theta}(a_{k}, b_{k} | \alpha_{k}, \beta_{k}) \label{eq: joint pdf}
\end{align}
This joint distribution can be used to generate samples via MCMC from the posterior distribution,
\[
p (\varepsilon_{L}, \varepsilon_{R}, \theta, \alpha_{1:K}, \beta_{1:K} | a_{1:K}, b_{1:K}),
\]
which is proportional to the joint distribution in \eqref{eq: joint pdf}. We present Algorithm \ref{alg: MCMC-DP-Est} in Appendix \ref{sec: MCMC algorithm for Bayesian Estimation of BMP}, which, being a variant of MHAAR \citep{andrieu_et_al_2020}, is a suitable MCMC algorithm for latent variable models such as the one in \eqref{eq: joint pdf}.

\begin{remark}
It is straightforward to modify the methodology for the other privacy definitions. For example, for BMP-L (resp.\ BMP-R), it suffices to fix $\varepsilon_{R} = \infty$ (resp.\ $\varepsilon_{L} = \infty$). For MP, one can enforce $\varepsilon_{L} = \varepsilon_{R}$ in the joint probability model and run Algorithm \ref{alg: MCMC-DP-Est} with flat priors, i.e., $\gamma_{k}(1) = 0.5$ for all $k$.
\end{remark}

\subsection{Designing MIAs and measuring their performance} \label{sec: Designing MIAs and measuring their performance}
Algorithm \ref{alg: node-MIA-v1} includes a node-level MIA for algorithm $\mA = (\mS, \Phi)$ and assesses its performance. First, we sample $J$ training graphs under $\mS$ and train GNNs on each graph, obtaining shadow models $\Phi_{1:J}$. Then, for each node $\nu_{k} \in \mathcal{T}$ and model $\Phi_j$, we form a \emph{challenge} $(\nu_{k}, \Phi_{j})$. For this challenge, all the shadow models \emph{except} $\Phi_{j}$ are used to learn the absence ($\mH_{0}$) and presence ($\mH_{1}$) of $\nu_{k}$. A decision is made for the challenge $(\nu_{k}, \Phi_{j})$ using the learned hypotheses (see Algorithm \ref{alg: MIA with posterior logit}), and the (possible) errors are $A_{k}, B_{k}$ counted. Hence, the numbers of challenges for which $H_{0}, H_{1}$ are true are 
\[
n_{0, \nu}:=\sum_{j = 1}^{J} \mathbb{I}(\nu \notin S_{j}), \quad n_{1, \nu}:= J - n_{0, \nu},
\]
respectively. This procedure requires $\mathcal{O}(KJ)$ computations.


\begin{algorithm}
\caption{Node-MIA}
\label{alg: node-MIA-v1}
\begin{algorithmic}[1]
\Require{Graph $\mG = (\mV, \mE)$, algorithm $\mA = (\mS, \Phi)$, target set $\mathcal{T} = \{\nu_{1},\ldots, \nu_{K}\}$, number of shadow models $J$}
\Ensure{True counts $n_{0, 1:K}$, $n_{1, 1:K}$; errors $A_{1:K}$, $B_{1:K}$}
\For{$j = 1, \ldots, J$}
\State $S_{j} \sim  \mS(\mG)$  {\Comment{\footnotesize Sample training graphs}}
\State $\Phi_{j} = \Phi(\mG_{S_{j}}, \bX_{S_{j}}, \bY_{S_{j}}).$ {\Comment{\footnotesize Train $J$ GNNs}}
\EndFor
\For{$k = 1, \ldots, K$}
\State Set $\nu = \nu_{k}$ {\Comment{\footnotesize Target node}}
\State $\hat{\gamma}_{\nu}(1) = \frac{1}{J}\sum_{j = 1}^{J} \mathbb{I}(\nu \in S_{j})$. {\Comment{\footnotesize Estimate prior}}
\State Test every shadow model for the membership of $\nu$:
\State Set $A_{k} = B_{k} = 0$, $n_{0, k} = n_{0, k} = 0$
\For{$j \in \{1, \ldots, J\}$}
\State $m = \mathbb{I}(\nu \in S_{j})$. {\Comment{\footnotesize True membership}}
\State $n_{m, k} \leftarrow n_{m, k} + 1$ {\Comment{\footnotesize True count}}
\State Construct the sets to learn $H_{0}$ and $H_{1}$:
\State  $\mH_{0} =  \{\Phi_{i}: i \in \{1, \ldots, J\}-
\{j\}, \nu \notin S_{i}  \}$ 
\State $\mH_{1} = \{\Phi_{i}: i \in \{1, \ldots, J\}-\{j\}, \nu \in S_{i} \}$,
\State $\hat{m} \gets \textsc{Decision}(\mathcal{I}_\nu, y_\nu,\mH_{0},\mH_{1}, \hat\gamma_{\nu}(1))$
\State $A_{k} \leftarrow A_{k} + \mathbb{I}( \hat{m} \neq m = 0)$ {\Comment{\footnotesize type-I error}}
\State $B_{k} \leftarrow B_{k} + \mathbb{I}(\hat{m} \neq m = 1)${\Comment{\footnotesize type-II error}}
\EndFor
\EndFor
\end{algorithmic}
\end{algorithm}

\begin{algorithm}
\caption{\textsc{Decision}($\mathcal{I}_\nu, y_\nu, \Phi, \mH_{0}, \mH_{1},  \gamma_{\nu}(1) $)}
\label{alg: MIA with posterior logit}
\begin{algorithmic}[1]
\Require{Target node query $\mathcal{I}_\nu$ and its true label $y_\nu$, target model $\Phi$; shadow models w. \& w.o membership of node $\nu$, $\mH_{0} = \{ \Phi_{0, 1}, \ldots, \Phi_{0, n_{0}}\}$, $\mH_{1} = \{ \Phi_{1, 1}, \ldots, \Phi_{1, n_{1}}\}$, prior membership probability $\eta_{\nu}(1)$}
\Ensure{Decision $\hat{m}$}
\For{$j = 0, 1$}
\For{$i = 1, \ldots, n_{j}$}
\State $z_{j, i} = [\Phi_{j, i}(\mathcal{I}_\nu)]_{y_{\nu}} - \log \sum_{y \neq y_{\nu}} e^{[\Phi_{j, i}(\mathcal{I}_\nu)]_{y}}$ 
\EndFor
\State Fit a normal distribution for $\mH_{j}$:
\[
\mu_{j} \gets \mathrm{mean}(z_{j,1:n_{j}}\big), \quad \sigma_{j} \gets \mathrm{std}\left(z_{j, 1:n_{j}}\right)
\]
       
\EndFor

\State Calculate the test value:
\[
z \gets [\Phi(\nu)]_{y_{\nu}} - \log \sum_{y \neq y_{\nu}} e^{[\Phi(\nu)]_{y}}.
\]

\State \Return Decision 
\[
\hat{m} = \mathbb{I} \left[ \frac{\mathcal{N}(z; \mu_{1}, \sigma_{1}^{2})}{\mathcal{N}(z; \mu_{0}, \sigma_{0}^{2})} \frac{\gamma_{\nu}(1)}{1-\gamma_{\nu}(1)}> 1 \right]
\]

\end{algorithmic}
\end{algorithm}

One can also decouple the models used to form challenges from those used to fit the member/non-member hypotheses; this alternative is shown in Algorithm~\ref{alg: node-MIA-v2} in Appendix~\ref{apndx: MIA alternative}.

\section{Experiments}

\begin{table*}[ht]
\caption{Hierarchical results table (p5 / p50 / p95 per cell) when \cora is employed with $25\%$ nodes in train set.}
\label{tbl:Cora_25}
\centering
\small
\setlength{\tabcolsep}{3pt}
\renewcommand{\arraystretch}{1.2}

\begin{tabular}{lll|cccc|cccc}
\hline
\multirow{3}{*}{\textbf{Sampling}} & \multirow{3}{*}{\textbf{Decision}} & \multirow{3}{*}{\textbf{Output}}
 & \multicolumn{4}{c|}{\textbf{\gcn}} & \multicolumn{4}{c}{\textbf{\gat}} \\

 &  &  & \multicolumn{2}{c}{\textbf{full edge}} & \multicolumn{2}{c|}{\textbf{no edges}} & \multicolumn{2}{c}{\textbf{full edge}} & \multicolumn{2}{c}{\textbf{no edges}} \\

 &  &  & \textbf{BMP-R} & \textbf{MP} & \textbf{BMP-R} & \textbf{MP} & \textbf{BMP-R} & \textbf{MP} & \textbf{BMP-R} & \textbf{MP} \\
\hline
\multirow{4}{*}{\textbf{random}} & \multirow{2}{*}{\textbf{weak}} & \textbf{clean}  & 2.2/2.3/2.4 & 3.5/3.5/3.8 & 6.2/7.7/10 & 7/8.3/11 & 6.2/7.4/9.9 & 7.3/8.6/11 & 1.8/1.9/2 & 3.7/3.9/4 \\
 &  & \textbf{noisy}  & 1.6/1.6/1.6 & 2.9/2.9/2.9 & 3.9/4.3/5.2 & 4.8/5.1/5.5 & 3.8/4.2/4.9 & 4.8/5/5.3 & 1.5/1.5/1.5 & 3/3.1/3.1 \\
 & \multirow{2}{*}{\textbf{strong}} & \textbf{clean}  & 4.2/5.4/8.2 & 4.2/4.5/4.8 & 6.6/8.1/11 & 7.4/8.7/11 & 6.9/8.3/11 & 7.9/9.1/12 & 3.1/3.5/4.2 & 3.8/3.9/3.9 \\
 &  & \textbf{noisy}  & 2.7/3.2/5.9 & 3.1/3.2/3.3 & 4.8/5.6/7.8 & 5.4/5.7/6.3 & 4.7/5.3/7.2 & 5.4/5.7/6.1 & 2.1/2.2/2.4 & 3/3/3.1 \\
\hline
\multirow{4}{*}{\textbf{snowball}} & \multirow{2}{*}{\textbf{weak}} & \textbf{clean}  & 4.4/4.8/5.7 & 9.7/11/13 & 6.1/7.4/9.8 & 9.8/11/13 & 4.8/5.3/6.5 & 9.4/11/13 & 5.1/5.8/8 & 9.6/11/13 \\
 &  & \textbf{noisy}  & 4/4.4/4.9 & 9.1/11/13 & 5.2/6.2/8.4 & 9.2/11/13 & 4.4/5/6.2 & 9.5/11/13 & 5/5.8/7.8 & 9.3/11/13 \\
 & \multirow{2}{*}{\textbf{strong}} & \textbf{clean}  & 5.1/6.1/8.8 & 9.9/11/13 & 5.5/6.7/9.2 & 10/11/13 & 5.2/6.3/8.8 & 9.6/11/13 & 5.6/6.9/9.5 & 9.8/11/13 \\
 &  & \textbf{noisy}  & 4.2/4.7/5.7 & 9.4/11/13 & 4.7/5.4/7.4 & 9.6/11/13 & 4.6/5.4/7.6 & 9.5/11/13 & 5/5.8/7.9 & 9.5/11/13 \\
\hline
\end{tabular}
\end{table*}

We evaluate privacy parameters across a systematically structured experimental space varying (i) dataset (\cora, \cseer), (ii) model architecture (\gcn, \gat), (iii) training fraction (25\%, 50\%), and (iv) sampling mechanism (random vs. snowball). Details of the datasets and the GNN architecture are provided in Appendix \ref{sec:datasetsandmodels}. 

\paragraph{Sampling strategies} We employ two sampling strategies to construct training graphs. In random sampling, nodes are selected independently and uniformly at random, and the induced subgraph is used for training. In snowball sampling, we start from a random seed node and iteratively add up to a fixed number of randomly chosen neighbors until the target size is reached, yielding a structurally connected training subgraph.

\paragraph{Attack Instantiation} For each configuration, we instantiate the attack using Algorithm~\ref{alg: node-MIA-v1} with
$J=1000$ shadow models, each mimicking the target mechanism 
$\mA$. For every node 
$v\in \mathcal{V}$, prior membership probabilities are estimated as the empirical frequency with which 
$v$ appears across the $N$ shadow training sets. These empirical frequencies are treated as the true priors when estimating privacy parameters. The adversary is assumed to know the true edge structure, i.e., the whole graph is used when querying the models (referred to as \emph{full edges} case in our results).

\paragraph{Attack variants} We additionally study several attack variants. First, we consider a \emph{weak test} in which the decision function relies solely on prediction likelihoods, without incorporating prior probabilities. Second, we inject noise into the last layer of the trained model. Third, we evaluate a \emph{no-edge} inference setting, where all graph edges are removed when computing prediction probabilities for the attack. Importantly, these modifications do not alter the underlying trained models; they only affect the information available to the adversary at inference time.

 

We exhaustively estimate both BMP-R and MP for all combinations of experiments. In addition, we estimate the most general BMP for a selected subset of experiments. The implementation details, including the selection of MCMC parameters, are given in Appendix \ref{sec: Implementation details}.

\paragraph{Results} In Table~\ref{tbl:Cora_25}, we present the 5th, 50th (median), and 95th percentile (p5 / p50 / p95) estimates of the privacy parameters for MP and BMP-R on the \cora\ dataset, when $25\%$ of the nodes are used for training. Tables \ref{tbl:Cora_50}, \ref{tbl:CiteSeer_25}, \ref{tbl:CiteSeer_50} in Appendix \ref{sec: Detailed Experimental Results} show the same results when \cora with 50\% sampling rate and \cseer with 25\% and 50\% sampling rates, respectively. Moreover, Figure \ref{fig: marginal effects} shows the marginal differences between the posterior means of privacy parameters obtained by choosing one option vs the other for the experiment configuration parameters, as indicated on the $x$-axes. 

\begin{figure}
\begin{subfigure}[t]{\linewidth} 
\centering
\includegraphics[scale = 0.95]{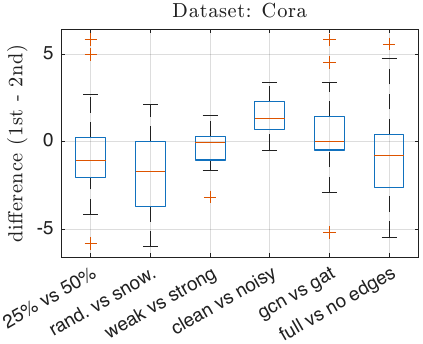}
\caption{\cora}
\end{subfigure}
\begin{subfigure}[t]{\linewidth} 
\centering
\includegraphics[scale = 0.95]{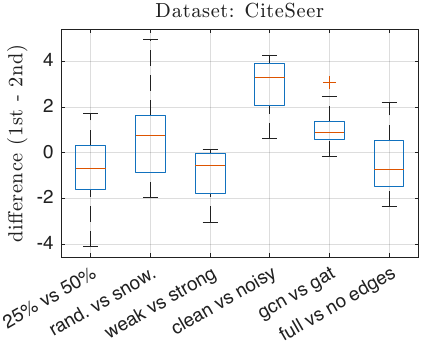}
\caption{\cseer}
\end{subfigure}
\caption{Marginal effects of the experiment parameters on $\epsilon_{R}$-BMP-R}
\label{fig: marginal effects}
\end{figure}

From Figure~\ref{fig: marginal effects}, we conclude that, for \cora and \cseer, \emph{sampling rate} (25\% vs.\ 50\%) and \emph{noise injection} (clean vs.\ noisy) are the dominant drivers of privacy-risk differences. Also, all factors affect privacy in the same direction for both datasets, except for \emph{sampling type} (random vs snowball). While snowball sampling seems less private for \cora, it seems slightly more private for \cseer. Finally, for \cora, a closer inspection of Table \ref{tbl:Cora_25} across random vs snowball sampling shows that the effects of edge availability and model architecture are more discriminative under random sampling. The detailed results also reveal that in almost every configuration (except under random sampling with 50\% rate), the BMP-R privacy parameter estimates are significantly smaller than those of MP. This suggests that the nodes with high likelihood ratios (hence high $\varepsilon_{\text{MP}}$) appear to have small membership priors, dampening the posterior membership probability and resulting in a small $\varepsilon_{R}$ relative to $\varepsilon_{\text{MP}}$. Figure \ref{fig: priors and bounds} illustrates this phenomenon for an example case.

\begin{figure}
\centering
\includegraphics[scale = 1.1]{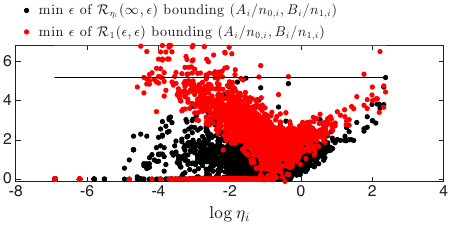}
\caption{Each dot corresponds to a target node in $\mathcal{T}$. Those nodes whose empirical error rates imply larger $\varepsilon$ tend to have smaller membership priors. As a result, $\varepsilon_{\text{MP}}$ estimated to be larger than $\varepsilon_{R}$ of BMP-R. (The experiment: \cora, 25\% sampling rate, \gcn, clean outputs, full edges, strong attack.)}
\label{fig: priors and bounds}
\end{figure}
\begin{figure}[h!]
\centerline{
\includegraphics[scale = 1.2]{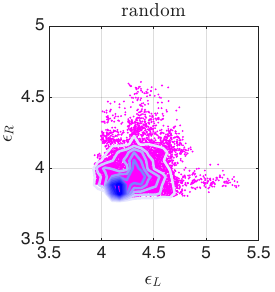}
}
\centerline{
\includegraphics[scale = 1.2]{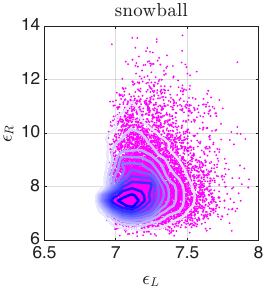}
}
\caption{Post.\ samples of \((\varepsilon_R,\varepsilon_L)\) of BMP for random vs.\ snowball samp.\ on \cora (50\% samp., \gcn, clean outputs).}
\label{fig: posteriorsamples}
\end{figure}
Figure \ref{fig: posteriorsamples} exemplifies joint estimation of $(\varepsilon_R,\varepsilon_L)$ of BMP and compares the posterior samples for $(\varepsilon_R,\varepsilon_L)$ for random and snowball sampling for \cora for 50\% split size. We observe that snowball sampling shifts the posterior mass toward substantially larger values of both privacy parameters, indicating higher privacy leakage than random sampling. We also observe that, in snowball sampling, the posterior estimates of $\varepsilon_{L}$ and $\varepsilon_{R}$ significantly differ, suggesting that the posterior membership and non-membership probabilities have different upper bounds, thereby justifying the presence of two separate parameters in the inequality defining BMP.

%
%


\section{Conclusion}
We introduced Bayesian Membership Privacy (BMP), a sampling-aware framework for analyzing node-level membership leakage in GNNs within a Bayesian framework. By incorporating node-dependent priors and casting membership inference as Bayesian hypothesis testing, BMP captures heterogeneous privacy risk that uniform metrics may miss. Our designed MIA and the accompanying MCMC posterior sampling method enable computation of uncertainty-aware estimates of privacy parameters of an arbitrary GNN. 

In our experiments, we estimate BMP parameters numerically using posterior sampling based on the hierarchical model developed in Section \ref{sec: Joint probability model and Bayesian estimation via MCMC}. An interesting line of future work is developing mechanisms that provide BMP with analytical guarantees.
\section*{acknowledgements} MK is supported by the Dutch Research Council (NWO) through the Vidi project PriXAI, file number 
VI.Vidi.243.224 of the research programme Vidi ENW under the grant https://doi.org/10.61686/FATII77836.

\bibliographystyle{apalike}
\bibliography{references}


\newpage

\onecolumn


\appendices

\section{Related Bayesian definitions} \label{sec: Related definitions} 

Our privacy definitions are closest to the posterior membership privacy PMP definitions of \citet{li2013membership}. In both our work and \citet{li2013membership}, the population (the whole dataset) is assumed to be known, and the sensitive unknown information is the membership of the elements in the population. Also, both our definitions consider the posterior membership probabilities, and we note that some one-to-one conversions are possible in some special cases. However, there are a few differences between the approaches: While PMP focuses on the difference between the posterior and the prior as a metric of informativeness of the algorithm, BMP focuses directly on the posterior, reflecting our view that the sampling phase is an integral part of the algorithm (hence the notation $\mA = (\mS, \Phi)$.). Also, while \citet{li2013membership} introduce ``positive PMP'' and ``negative PMP'' (in parallel to BMP-R and BMP-L) corresponding to membership and non-membership posterior probabilities, our BMP Definition \ref{def: BMP} unifies the two cases. 
Finally, while we consider a single sampling distribution $\mS$, attached to the algorithm $\mathcal{A}$, \citet{li2013membership} assumes a set of sampling distributions for which the defining inequality holds.

Differential identifiability of \citet{Lee_and_Clifton_2012} is also related to our work in that their definition corresponds to the $\varepsilon$-BMDP-R.


There also exist related definitions that employ the Bayesian paradigm, designed for general data structures, yet applicable to membership information as a special case. For example, \citet{Yang_et_al_2015} propose a Bayesian DP (BDP) definition for correlated sensitive data. BDP concerns the likelihood ratio involving an unknown component of the sensitive data, conditional on a known part of the sensitive data. The MP in Definition \ref{def: MP} corresponds to the BDP of \citet{Yang_et_al_2015} when the membership information is viewed as correlated data, so that the unknown element is a node, and no membership information is assumed known. Unlike our approach, the term ``Bayesian'' is used not because of the involvement of the posterior distribution of the unknown element in question; it is rather because of assigning a probability distribution on the sensitive data so that marginal likelihoods can be computed via marginalization. Another definition, with the same name BDP, is proposed in \citet{Triastcyn_and_Faltings_2020}. The definition is reminiscent of the classical DP, with the difference that the ratio of likelihoods is treated as a random variable by employing the prior of the differing element between the datasets.

Our definition also bears similarities to PAC Membership Privacy \citep[Def.~5]{Xiao_and_Devadas_2023}: both are non-differential notions and assume that the adversary knows the population and the sampling process, but not the realized sampled training set. A key distinction, however, is that BMP explicitly supports non-uniform, node-dependent membership priors induced by graph sampling, making it better suited to graph settings with heterogeneous inclusion probabilities.

\section{Missing Proofs} \label{sec: Missing Proofs}

\begin{proof}[Proof of Proposition \ref{prop: lower bound on BMP}]
We will only prove the first inequality; the second one follows similarly. Fix $i \in \{1, \ldots, N\}$. We have $\mathbb{E}(\pi_{i}(1 | W)) = \gamma_{i}$, where the expectation is with respect to the marginal distribution $W$ (with density $f_{0}(\cdot)$). This means that there is value $w \in \mathcal{W}$ with $f_{0}(w) > 0$ such that $\pi_{i}(1 | w) \geq \gamma_{i}(1)$, which implies that $ \pi_{i}(1 | w)/\pi_{i}(0 | w) \geq \gamma_{i}(1)/\gamma_{i}(0)$. 
Since $\mA$ is $\varepsilon$-BMP, we have $e^{\varepsilon_{R}} \geq \pi_{i}(1 | w)/\pi_{i}(0 | w) \geq \gamma_{i}(1)/\gamma_{i}(0)$. Since the above holds for any $i \in \{1, \ldots, N\}$, we have the claimed lower bound for $\varepsilon_{R}$.
\end{proof}

For the proofs of Proposition \ref{prop: post-processing} and Theorems \ref{thm: R for BMP} and \ref{thm: expected cost and wrong decision prob}, note the useful fact that the definitions in Section \ref{sec: Definitions for membership privacy} can be written in terms of output sets with positive probability, i.e., 
\[
e^{-\varepsilon_{L}} \leq \frac{\pi_{i}(1 | w)}{\pi_{i}(0 | w)} \leq e^{\varepsilon_{R}} \Leftrightarrow e^{-\varepsilon_{L}} \leq \frac{\pi_{i}(1 | O)}{\pi_{i}(0 | O)} \leq e^{\varepsilon_{R}}
\]
for any $O \subseteq \mathcal{W}$ with $\Pr(W \in O) > 0$ the marginal and full conditional posteriors
\begin{align*}
&\pi_{i}(m_{i} | O) := \Pr(M_{i} = m_{i} | W \in O). 
\end{align*}

\begin{proof}[Proof of Proposition \ref{prop: post-processing}]
To distinguish between the posterior membership probabilities conditional on the outputs of $\mA$ and $\mathcal{B}$, we denote those probabilities as $\pi^{\mA}$ and $\pi^{\mB}$, respectively. Let $W = \Phi(\mG_{S}, X_{S})$ with $S \sim \mS(\mG)$ and let $Z = \varphi(\Phi(\mG_{S}, X_{S}))$ be the output of $\mathcal{B}$. Fix $i \in \{1, \ldots, N\}$. For any $O_{z} \subseteq \mathcal{Z}$, we have $\{ Z \in O_{z} \} = \{ W \in O_{y} \}$, where $O_{w} = \varphi^{-1}(O_{z}) = \{ w \in \mathcal{W}: \varphi(w) \in O_{z} \} \subseteq \mathcal{W}$. Hence, $\pi^{\mB}_{i}(1 | O_{z}) := \Pr(M_{i} = 1 | Z \in O) = \Pr(M_{i} = 1 | W \in O_{w}) = \pi^{\mA}_{i}(1 | O_{w})$. Since $\mA$ is $(\varepsilon_{L}, \varepsilon_{R})$,
\[
e^{-\varepsilon_{L}} \leq \pi^{\mA}_{i}(1 | O_{w})/\pi^{\mA}_{i}(0 | O_{w}) = \pi^{\mB}_{i}(1 | O_{z})/\pi^{\mB}_{i}(0 | O_{z}) \leq e^{\varepsilon_{R}}.
\]
 Hence, the claim is proven.
\end{proof}

\begin{proof}[Proof of Proposition \ref{prop: MP to MPP}]
Assume $\mA$ is $\varepsilon$-MP. Then for any $w \in \mathcal{W}$, and for all $i \in \{1, \ldots, N\}$, we have
\[
e^{-\varepsilon} \eta_{i} \leq \frac{\pi_{i}(1 | w)}{\pi_{i}(0 | w)} \leq \eta_{i} e^{\varepsilon}
\]
or
\[
e^{-(\varepsilon - \log \eta_{i})} \leq \frac{\pi_{i}(1 | w)}{\pi_{i}(0 | w)} \leq e^{\varepsilon + \log \eta_{i}}
\]
A uniform bound can be developed for the ratio $\frac{\pi_{i}(1 | w)}{\pi_{i}(0 | w)}$ as
\[
e^{-(\varepsilon - \log \eta_{\min})} \leq \frac{\pi_{i}(1 | w)}{\pi_{i}(0 | w)} \leq e^{\varepsilon + \log \eta_{\max}}
\]
Therefore, $\mathcal{A}$ is $(\varepsilon_{L}, \varepsilon_{R})$-BMP, with
\[
\varepsilon_{L} = \varepsilon -  \log \eta_{\min}, \quad \varepsilon_{R} = \varepsilon +  \log \eta_{\max}.
\]

For the other direction, assume $\mathcal{A}$ is $(\varepsilon_{L}, \varepsilon_{R})$-BMP. Then
\[
e^{-\varepsilon_{L}} \leq \frac{\pi_{i}(1 | w)}{\pi_{i}(0 | w)} \leq e^{\varepsilon_{R}},
\]
or, 
\[
e^{-(\varepsilon_{L} + \log \eta_{i})} \leq \frac{f_{i}(w | 1)}{f_{i}(w | 0)} \leq e^{\varepsilon_{R} - \log \eta_{i}},
\]
A uniform bound can be developed for the ratio $\frac{f_{i}(w | 1)}{f_{i}(w | 0)}$ as
\[
e^{-(\varepsilon_{L} + \log \eta_{\max})} \leq \frac{f_{i}(w | 1)}{f_{i}(w | 0)} \leq e^{\varepsilon_{R} - \log \eta_{\min}}
\]
Considering the maximum of the two exponents, we conclude that $\mathcal{A}$ is $\varepsilon$-MP where 
\[
\varepsilon = \max\left\{\varepsilon_{L} + \log \eta_{\max},  \varepsilon_{R} - \log \eta_{\min} \right\}.
\]
\end{proof}

\begin{proof}[Proof of Proposition \ref{prop: Composition with classical MP}]
For any given $w_{0:T} \in \mathcal{W}^{T+1}$, we have 
\[
\frac{\pi_{i}(m_{i} = 1 | W_{0:T})}{\pi_{i}(m_{i} = 0 | W_{0:T})} = \frac{\gamma_{i}(1) f^{(0)}_{i}(w_{0} | 1) f^{(1)}_{i}(w_{1} | 1) \ldots f^{(T)}_{i}(w_{T} | 1)}  {\gamma_{i}(0) f^{(0)}_{i}(w_{0} | 0) f^{(1)}_{i}(w_{1} | 0) \ldots f^{(T)}_{i}(w_{T} | 0)} = \frac{\pi_{i}(1 | w_{0})}{\pi_{i}(0 | w_{0})} \frac{ f^{(1)}_{i}(w_{1} | 1) \ldots f^{(T)}_{i}(w_{T} | 1)}  { f^{(1)}_{i}(w_{1} | 0) \ldots f^{(T)}_{i}(w_{T} | 0)} 
\]
Since $\mA_{t}$ is $\varepsilon_{t}$-MP, we have
 \begin{equation} \label{eq: MP ineq comp}
\exp\left\{-\sum_{t = 1}^{T} \varepsilon_{t} \right\} \leq \frac{ f^{(1)}_{i}(w_{1} | 1) \ldots f^{(T)}_{i}(w_{T} | 1)}  { f^{(1)}_{i}(w_{1} | 0) \ldots f^{(T)}_{i}(w_{T} | 0)} \leq \exp\left\{\sum_{t = 1}^{T} \varepsilon_{t} \right\}
 \end{equation}
 Also, since $\mA_{0}$ is $(\varepsilon_{L}, \varepsilon_{R})$-BMP, we have
 \begin{equation} \label{eq: BMP ineq comp}
e^{-\varepsilon_{L}} \leq \frac{\pi_{i}(1 | w_{0})}{\pi_{i}(0 | w_{0})} \leq e^{\varepsilon_{R}}
 \end{equation}
 Combining \eqref{eq: MP ineq comp} and \eqref{eq: BMP ineq comp}, we have the desired bound. 
 
 Finally, the result would not change if $W_{t}$ depends on $W_{0:t-1}$ for $t = 1, \ldots, T$: In that case, the ratio in \eqref{eq: MP ineq comp} would change to
 \[
 \frac{ f^{(1)}_{i}(w_{1} | w_{0}, 1)}{f^{(1)}_{i}(w_{1} | w_{0}, 0) } \ldots \frac{f^{(T)}_{i}(w_{T} | w_{0:T-1}, 1)}{ f^{(T)}_{i}(w_{T} | w_{0:T-1}, 0)}
 \]
 which can be bounded in the same way because each $\mathcal{A}_{t}$ is $\varepsilon_{t}$-MP for any $W_{0:t-1} = w_{0:t-1}$.
\end{proof}

\begin{proof}[Proof of Proposition \ref{prop: Composition of BMPs}]
Let the membership information of the $t$th sampling be $M^{(t)}$. Then, given $W_{1:T} = w_{1:T}$
\[
\Pr\!\left(\exists t \in (1,\ldots,T): M_i^{(t)} = 1 \mid W_{t:T} = w_{1:T}\right)
\le
1 - \prod_{t=1}^{T} (1-\kappa_t)
\]
where $\kappa_t = \frac{e^{\varepsilon_R^{(t)}}}{1 + e^{\varepsilon_R^{(t)}}}$. Also, we can prove that
\[
\Pr\!\left(\forall t \in (1,\ldots,T): M_i^{(t)} = 0 \mid W_{t:T} = w_{1:T} \right) \le \prod_{t=1}^{T} \mu_t
\]
where $\mu_t = \frac{e^{\varepsilon_L^{(t)}}}{1 + e^{\varepsilon_L^{(t)}}}$. Combining both inequalities, we can show that the composition is
\((\varepsilon_L,\varepsilon_R)\)-BMP, where
\[
\varepsilon_L=\sum_{t=1}^{T} \log \mu_t-\log\!\left[1 - \prod_{t=1}^{T} \mu_t\right], \quad \varepsilon_R =\log\!\left[1 - \prod_{t=1}^{T} (1-\kappa_t)\right]-\sum_{t=1}^{T} \log(1-\kappa_t).
\]
So, the claim is proven.
\end{proof}

\begin{proof}[Proof of Theorem \ref{thm: R for BMP}]
For any $i \in \{1, \ldots, N\}$, $m_{i} \in \{0, 1\}$, and $O \subseteq \mathcal{W}$, let $F_{i}(O | m_{i}) := \Pr(W \in O | M_{i} = m_{i})$. Any decision rule based on the output of $\mA$ can be written in terms of a critical region $C \subseteq \mathcal{W}$ as $D(W) = \mathbb{I}(W \in C)$ for a measurable set $C$. Definition \ref{def: BMP} guarantees that $\gamma_{i}(0) F_{i}(C | 0) \leq e^{\varepsilon_{L}} \gamma_{i}(1) F_{i}(C | 1)$. Using $\alpha_{i} = F_{i}(C | 0)$, $\beta_{i} =1 - F_{i}(C | 1)$, we have
\[
\eta_{i}^{-1} e^{-\varepsilon_{L}} \alpha_{i} + \beta_{i} \leq 1
\]
Definition \ref{def: BMP} also implies $\gamma_{i}(1) F_{i}(C | 1) \leq e^{\varepsilon_{R}} \gamma_{i}(0)  F_{i}(C | 0)$. Using the same substitutions for $\alpha_{i}$ and $\beta_{i}$, we get
\[
\eta_{i}^{-1} e^{\varepsilon_{R}} \alpha_{i} + \beta_{i} \geq 1.
\]
Next, applying Definition \ref{def: BMP} with the critical region being $C^{c}$ so that $\alpha_{i} = F_{i}(C^{c} | 0)$, $\beta_{\nu} = F_{i}(C | 1)$. We have $\gamma_{i}(0) F_{i}(C^{c} | 0) \leq e^{\varepsilon_{L}} \gamma_{i}(1) F_{i}(C^{c} | 1)$. Using the substitutions for $\alpha_{i}$ and $\beta_{i}$, we get
\[
\alpha_{i} + \eta_{i} e^{\varepsilon_{L}} \beta_{i} \geq 1
\]
Finally, the right-hand inequality for $C^{c}$ reads $\gamma_{i}(1) F_{i}(C^{c} | 1) \leq e^{\varepsilon_{R}} \gamma_{i}(0) F_{i}(C^{c} | 0)$, which yields
\[
\eta_{i} e^{-\varepsilon_{R}} \beta_{i} + \alpha_{i} \leq 1.
\]
Hence, all the inequalities in $\mR_{i}(\varepsilon_{L}, \varepsilon_{R})$ are satisfied. For the reverse implication, it suffices to go backwards from each equality to arrive at the inequality in the Definition \ref{def: BMP}.
\end{proof}

\begin{proof}[Proof of Theorem \ref{thm: expected cost and wrong decision prob}]
For a given decision rule $D$, the expected cost is
\begin{align*}
\mathbb{E}[\text{Cost}] &=c_{0}  \int \mathbb{I}_{1}(D(w)) \gamma_{i}(0)  f_{i}(w| 0) \md w +  c_{1} \int \mathbb{I}_{0}(D(w)) \gamma_{i}(1) F_{i}(w| 1) \md w  \\
& =  \int [ c_{0} \pi_{i}(0 | w) \mathbb{I}_{1}(D(w)) + c_{1} \pi_{i}(1 | w) \mathbb{I}_{0}(D(w))  ]  f_{0}(\md w) \\
& = \mathbb{E}[c_{0} \pi_{i}(0 | W)  \mathbb{I}_{1}(D(W))  + c_{1} \pi_{i}(1 | W) \mathbb{I}_{0}(D(W))  ]
\end{align*}
This is minimized when 
\[
D(w) = \begin{cases} 1 & c_{0} \pi_{i}(0| w) \leq c_{1} \pi_{i}(1| w) \\
0 & c_{0} \pi_{i}(0 | w) > c_{1} \pi_{i}(1| w)
\end{cases}, \quad w \in \mathcal{W}
\]
and the minimum expected cost $E_{min}$ is 
\[
E_{min} = \mathbb{E}[\min \{ c_{0} \pi_{i}(0|W), c_{1} \pi_{i}(1|W) \}].
\]
By Definition \ref{def: BMP}, $e^{-\varepsilon_{L}} \leq \pi_{i}(1|W)/\pi_{i}(0|W) \leq e^{\varepsilon_{R}}$, which yields
\[
\min \{ c_{0} \pi_{i}(0 | W), c_{1} \pi_{i}(1|W) \} \geq \min\left\{ c_{0} \frac{1}{1 + e^{\varepsilon_{R}}}, c_{1}  \frac{1}{1 + e^{\varepsilon_{L}}} \right\}
\]
On the other hand, since $\pi_{i}(0|W)/\pi_{i}(1|W) \geq e^{-\varepsilon_{R}}$, we have 
\begin{align*}
E_{\min} &= \mathbb{E}\left[ \pi_{i}(1|W) \min \left\{ c_{0}\frac{\pi_{i}(0|W)}{\pi_{i}(1|W)}, c_{1} \right\} \right]\\
& \geq \mathbb{E}\left[ \pi_{i}(1|W) \min\{ c_{0} e^{-\varepsilon_{R}}, c_{1} \}\right] \\
& \geq \gamma_{i}(1) \min\{ c_{0} e^{-\varepsilon_{R}}, c_{1} \}
\end{align*}
where in the last line we have used the fact that $\mathbb{E}[\pi_{i}(1|W)] = \gamma_{i}(1)$. 
By following similar steps but using $\pi_{i}(0|W)/\pi_{i}(1|W) \geq e^{-\varepsilon_{L}}$, we also have 
\begin{align*}
E_{\min} &\geq \gamma_{i}(0) \min\{ c_{1} e^{-\varepsilon_{L}}, c_{0} \}
\end{align*}
The three lower bounds can be combined to yield \eqref{eq: lower bound for cost}. In particular, the probability of a wrong decision is obtained by $\mathbb{E}(\text{Cost})$ with $c_{1} = c_{2} = 1$, yielding the lower bound \eqref{eq: lower bound for wrong decision probability}.
\end{proof}

\section{MCMC algorithm for Bayesian Estimation of BMP} \label{sec: MCMC algorithm for Bayesian Estimation of BMP}

\begin{algorithm}
\begin{algorithmic}
\caption{MCMC for posterior sampling for $(\varepsilon_{L}, \varepsilon_{R}, \theta)$}
\label{alg: MCMC-DP-Est}
\Require{Numbers of true non-membership and membership instances $n_{0, 1:K}, n_{1, 1:K}$; error counts $A_{1:K}$, $B_{1:K}$; \\ 
prior distribution for BMP parameters $p(\varepsilon_{L}, \varepsilon_{R})$; prior distribution $p(\theta)$ on $\theta$; \\
proposal distributions $q(\varepsilon_{L}', \varepsilon_{R}' | \varepsilon_{L}, \varepsilon_{R})$, $q(\theta' | \theta)$; proposal distribution for $(\alpha_{k}, \beta_{k})$, $q_{k}(\alpha, \beta)$, $k = 1, \ldots, K$; \\
$M$: number of auxiliary variables for each node $k \in \{1, \ldots, K\}$; \\
$T$: Number of MCMC iterations; \\
Initial values: $\varepsilon_{L}, \varepsilon_{R}, \theta$.}
\Ensure{$\{ \varepsilon_{L}^{(t)}, \varepsilon_{R}^{(t)}, s^{(t)}, \theta^{(t)}; t = 1, \ldots, T \}$: MCMC samples}
\For{$t = 1, \ldots, T$}
\State Draw the proposal $(\varepsilon_{L}', \varepsilon_{R}') \sim q(\varepsilon_{L}', \varepsilon_{R}' | \varepsilon_{L}, \varepsilon_{R})$, and $\theta' \sim q(\theta' | \theta)$.

\For{$k = 1:K$}
\State Set $(\alpha_{k}^{(1)}, \beta_{k}^{(1)}) = (\alpha_{k}, \beta_{k})$.
\State Sample $\alpha_{k}^{(i)}, \beta_{k}^{(i)} \overset{\text{iid}}{\sim} q_{j}(\alpha, \beta)$ for $i = 2, \ldots, M$.
\State Calculate the weights
\begin{align*}
w_{k}^{(i)} &= \frac{\frac{\mathbb{I}((\alpha_{k}^{(i)}, \beta_{k}^{(i)}) \in \mR_{\eta_{\nu_{k}}}(\varepsilon_{L}, \varepsilon_{R}))}{|\mR_{\eta_{\nu_{k}}}(\varepsilon_{L}, \varepsilon_{R})|}   g_{\theta}(A_{k}, B_{k} | \alpha_{k}^{(i)}, \beta_{k}^{(i)})}{q_{k}(\alpha_{k}^{(i)}, \beta_{k}^{(i)})}, \quad i = 1, \ldots, M\\
w_{k}^{\prime(i)} &= \frac{\frac{\mathbb{I}((\alpha_{k}^{(i)}, \beta_{k}^{(i)}) \in \mR_{\eta_{\nu_{k}}}(\varepsilon'_{L}, \varepsilon'_{R}))}{|\mR_{\eta_{\nu_{k}}}(\varepsilon'_{L}, \varepsilon'_{R})|}   g_{\theta'}(A_{k}, B_{k} | \alpha_{k}^{(i)}, \beta_{k}^{(i)})}{q_{k}(\alpha_{k}^{(i)}, \beta_{k}^{(i)})}, \quad i = 1, \ldots, M
\end{align*}
\EndFor
\State Acceptance probability: 
\[
A = \min \left\{1, \frac{p(\varepsilon_{L}', \varepsilon_{R}')  q(\varepsilon_{L} | \varepsilon_{L}')q(\varepsilon_{R} | \varepsilon_{R}') }{p(\varepsilon_{L}, \varepsilon_{R}) q(\varepsilon_{L}' | \varepsilon_{L})  q(\varepsilon_{R}' | \varepsilon_{R})} \prod_{k = 1}^{K} \frac{\sum_{i = 1}^{M} w_{k}^{\prime(i)}}{ \sum_{i = 1}^{M} w_{k}^{(i)}} \right\}.
\]
\State \textbf{Accept/Reject}: Draw $u \sim \text{Unif}(0, 1)$.
\If{$u \leq A$}
\State Set $(\varepsilon_{L}, \varepsilon_{R}) = (\varepsilon_{L}', \varepsilon_{R}')$, $\theta = \theta'$, and $\bar{w}_{1:K}^{(1:M)} = w_{1:K}^{\prime (1:M)}$.
\Else
\State Keep $\varepsilon_{L}, \varepsilon_{R}, \theta$ and set $\bar{w}_{1:K}^{(1:M)} = w_{1:K}^{(1:M)}$.
\EndIf
\For{$k = 1, \ldots, K$}
\State Sample $j \in \{1, \ldots, M\}$ w.p.\ $\propto \bar{w}_{k}^{(j)}$ and set $(\alpha_{k}, \beta_{k}) = (\alpha_{k}^{(j)}, \beta_{k}^{(j)})$.
\EndFor
\State Store $(\varepsilon_{L}^{(t)}, \varepsilon_{R}^{(t)}) = (\varepsilon_{L}, \varepsilon_{R})$, $\theta^{(t)} = \theta$.
\EndFor
\end{algorithmic}
\end{algorithm}

\begin{remark}
The model in \eqref{eq: joint pdf} neglects any conditional dependency among the error probabilities $(\alpha_{1:K}, \beta_{1:K})$. One way to allow for dependence among the error probabilities is to assume a variable $s \in (0,1)$, with a prior $p(s)$, in the spirit of \citep{Yildirim_et_al_2025}, and let
\[
(\alpha_{k}, \beta_{k}) | \varepsilon, s \sim \textup{Uniform}(\mR_{\eta_{\nu_{k}}}(\varepsilon_{L}, \varepsilon_{R}) - \mR_{(\eta_{\nu_{k}})^{s}}(s\varepsilon_{L}, s\varepsilon_{R}))
\]
Note that in this case $(\alpha_{k}, \beta_{k})$, $k = 1, \ldots, K$, are dependent (after integrating out $s$) conditional on $\varepsilon_{L}, \varepsilon_{R}$. Moreover, the parameter $s$ can be interpreted as a \emph{attack strength} parameter: When $s = 0$, the error probabilities are expected to be anywhere in $\mR_{\eta_{\nu_{j}}(\varepsilon_{L}, \varepsilon_{R}})$ allowing for weakest tests. As $s$ gets larger, the error probabilities are distributed in a narrower region, farther from the $\alpha+\beta = 1$ line where the worst tests lie. 
\end{remark}

\section{Alternative MIA} \label{apndx: MIA alternative}
An alternative MIA is given in Algorithm \ref{alg: node-MIA-v2}. In this alternative version, one decouples the models used to form challenges from those used to fit the member/non-member hypotheses.

\begin{algorithm}
\caption{Node-MIA: Alternative 2}
\label{alg: node-MIA-v2}
\begin{algorithmic}
\State 
\For{$j = 1, \ldots, J^{(1)}$}
\State $S^{(1)}_{j} \sim  \mS(\mG)$,  $\Phi^{(1)}_{j} = \Phi(\mG_{S^{(1)}_{j}}, \bX_{S_{j}}, \bY_{S_{j}}).$ {\Comment{\footnotesize Sample $J^{(1)}$ datasets and train a GNN on them.}}
\EndFor
\For{$j = 1, \ldots, N^{B}$}
\State $S^{(2)}_{j} \sim  \mS(\mG)$,  $\Phi^{(2)}_{j} = \Phi(\mG_{S^{(2)}_{j}}, \bX_{S^{(2)}_{j}}, \bY_{S^{(2)}_{j}})$. {\Comment{\footnotesize Sample $J^{(2)}$ datasets and train a GNN on them (used to challenge MIA)}} 
%
\EndFor
\For{$k = 1, \ldots, K$}
\State Set $\nu = \nu_{k}$
\State $\hat{\gamma}_{\nu}(1) = \#\{j \in \{1, \ldots, J^{(1)} \}: \nu \in S^{(1)}_{j}\}/J^{(1)}$. {\Comment{\footnotesize Estimate the prior membership probability}}
\State  $\mH_{0} = \{\Phi^{A}_{j}: j \in \{1, \ldots, J^{A}\}, \nu \notin S^{(1)}_{j} \}$,  $\mH_{1} = \{\Phi^{A}_{j}: j \in \{1, \ldots, J^{A}\}, \nu \in S^{(1)}_{j} \}$, {\Comment{\footnotesize Construct the sets for $H_{0}$, $H_{1}$}}
\State Set $A_{k} = B_{k} = 0$, $n_{0, k} = n_{0, k} = 0$
\For{$j \in \{1, \ldots, J^{(2)}\}$}
\State $m = \mathbb{I}(\nu \in S^{2}_{j})$. {\Comment{\footnotesize True membership}}
\State $n_{m, k} \leftarrow n_{m, k} + 1$ {\Comment{\footnotesize True count}}
\State $\hat{m} \gets \textsc{Decision}(\mathcal{I}_\nu, y_\nu,\mH_{0},\mH_{1}, \hat\gamma_{\nu}(1))$ \Comment{\footnotesize Decide on the membership}
\State $A_{k} \leftarrow A_{k} + \mathbb{I}( \hat{m} \neq m = 0)$ {\Comment{\footnotesize type-I error}}
\State $B_{k} \leftarrow B_{k} + \mathbb{I}(\hat{m} \neq m = 1)${\Comment{\footnotesize type-II error}}
\EndFor
\EndFor
\end{algorithmic}
\end{algorithm}

\section{Datasets and models}
\label{sec:datasetsandmodels}
\subsection{Datasets}
\label{sec:datasets}
We conduct experiments on two widely used citation network datasets from the Planetoid benchmark: \cora and \cseer.
Cora consists of $2,708$ scientific publications (nodes) classified into $7$ research topics (classes). The citation network contains $5,429$ edges. Each node is represented by a 1,433-dimensional sparse bag-of-words feature vector indicating the presence of specific terms in the document.

\cseer contains $3,327$ scientific publications (nodes) categorized into $6$ research areas (classes), with $4,732$ citation links (edges). Node features are $3,703$-dimensional sparse bag-of-words vectors derived from document text.

\subsection{\gcn and \gat models}
For the Graph Convolutional Network (GCN)~\citep{kipf2016semi}, the aggregation corresponds to normalized neighborhood feature averaging. 
Let $\tilde{\mA} = \mA + \mathbf{I}$ be the adjacency matrix with added self-loops 
and $\tilde{\mathbf{D}}$ the corresponding degree matrix. 
The layer-wise update can be written as
\begin{align*}
\mathbf{z}_i^{(\ell)}
&= \sum_{j \in \mathcal{N}(i) \cup \{i\}}
\frac{1}{\sqrt{\tilde{d}_i \tilde{d}_j}}
\, \mathbf{x}_j^{(\ell-1)}, \\
\mathbf{x}_i^{(\ell)}
&= \sigma\!\left( \mathbf{W}^{(\ell)} \mathbf{z}_i^{(\ell)} \right),
\end{align*}
where $\tilde{d}_i$ is the degree of node $i$ in $\tilde{\mA}$,
$\mathbf{W}^{(\ell)}$ is a learnable weight matrix,
and $\sigma$ is a non-linear activation function.

For the Graph Attention Network (GAT)~\citep{velivckovic2017graph}, 
aggregation is performed via learned attention weights. 
Specifically,
\begin{align*}
\alpha_{ij}^{(\ell)}
&= \frac{
\exp\!\left(
\operatorname{LeakyReLU}\!\big(
\mathbf{a}^{(\ell)\top}
[\mathbf{W}^{(\ell)} \mathbf{x}_i^{(\ell-1)} \,\Vert\,
 \mathbf{W}^{(\ell)} \mathbf{x}_j^{(\ell-1)}]
\big)
\right)
}{
\sum_{k \in \mathcal{N}(i) \cup \{i\}}
\exp\!\left(
\operatorname{LeakyReLU}\!\big(
\mathbf{a}^{(\ell)\top}
[\mathbf{W}^{(\ell)} \mathbf{x}_i^{(\ell-1)} \,\Vert\,
 \mathbf{W}^{(\ell)} \mathbf{x}_k^{(\ell-1)}]
\big)
\right)
},
\\
\mathbf{z}_i^{(\ell)}
&= \sum_{j \in \mathcal{N}(i) \cup \{i\}}
\alpha_{ij}^{(\ell)} \mathbf{W}^{(\ell)} \mathbf{x}_j^{(\ell-1)}, \\
\mathbf{x}_i^{(\ell)}
&= \sigma\!\left( \mathbf{z}_i^{(\ell)} \right),
\end{align*}
where $\mathbf{a}^{(\ell)}$ is a learnable attention vector and
$\Vert$ denotes concatenation. In practice, multi-head attention is often used, and the outputs of multiple attention heads are concatenated or averaged.

\subsection{Implementation details} \label{sec: Implementation details}
For the \gcn experiments, we use a 2-layer GCN with hidden dimension \(16\), trained for \(100\) epochs. For the GAT experiments, we use a 2-layer \gat trained for 
100 epochs, with a hidden dimension of 4 per attention head in the first layer (with 8 heads in the first layer). 

In the graph sampling pipeline, we set the maximum number of sampled neighbors to \(5\) (for snowball expansion). For the noisy-output setting, we add Gaussian noise to the parameters of the last layer with standard deviation \(0.25\). 

In all the Bayesian estimation tasks, we relied on the hierarchical model in \eqref{eq: joint pdf} and the MHAAR algorithm in Algorithm \ref{alg: MCMC-DP-Est} designed for that hierarchical model, with necessary modifications to suit the methodology for the given privacy definition. Specifically,
\begin{itemize}
\item For BMP-R, we fixed $\varepsilon_{L} = \infty$ and assign the prior $\varepsilon_{R} \sim \mathcal{N}(0, 10)$;
\item For MP, we took $\varepsilon_{\text{MP}} \sim \mathcal{N}(0, 10)$. 
\item For BMP, we took $\varepsilon_{L},\varepsilon_{R} \overset{\text{iid}}{\sim} \mathcal{N}(0, 10)$. 
\end{itemize}
For all tasks, the attack strength parameter is fixed to $s = 0$, i.e., the $(\alpha_{k}, \beta_{k})$-error probabilities of a MIA attack are assumed to be uniformly distributed over their allowed region $\mR_{\eta_{\nu_{k}}}(\varepsilon_{L}, \varepsilon_{R})$. The error counts were modelled as independent Binomials, i.e., $A_{k} \sim \text{Binom}(n_{0, k}, \alpha_{k})$ and $B_{k} \sim \text{Binom}(n_{1, k}, \beta_{k})$, independently, so the parameter $\theta$ is absent from this model. 

We used independent normal random-walk proposals for $\varepsilon_{L}$ and $\varepsilon_{R}$, whose variances are adapted during MCMC iterations to achieve an acceptance rate of $0.23$. In all experiments, the number of auxiliary variables $M$ for $(\alpha_{k}, \beta_{k})$ for each $k$ is taken $M = 50$. To sample auxiliary variables for $(\alpha_{k}, \beta_{k})$, we used 
\[
q_{k}(\alpha^{(i)}_{k}, \beta^{(i)}_{k}) = \text{Beta}(\alpha^{(i)}_{k} | A_{k} + 1, n_{0, k} - A_{k} +1) \text{Beta}(\beta^{(i)}_{k} | B_{k} + 1, n_{1, k} - B_{k} +1).
\]
To estimate the $\varepsilon_{\text{MP}}$ parameter of $\varepsilon_{\text{MP}}$-MP and the $\varepsilon_{R}$ parameter of $\varepsilon_{R}$-BMP-R, we ran Algorithm \ref{alg: MCMC-DP-Est} for $T = 10000$ iterations and the last 5000 samples are considered after a burn-in period of 5000 iterations. 

To jointly estimate the $\varepsilon_{L}, \varepsilon_{R}$ parameters of $(\varepsilon_{L}, \varepsilon_{R})$-BMP, we ran Algorithm \ref{alg: MCMC-DP-Est} for $T = 50000$ iterations, and the last 25000 samples are considered after a burn-in period of 25000 iterations.

\section{Detailed Experimental Results} \label{sec: Detailed Experimental Results}

\subsection{Additional results for estimating MP and BMP-R parameters}
\begin{table*}[ht]
\caption{Hierarchical results table (p5 / p50 / p95 per cell) when Cora is employed with $50\%$ nodes sampled to build train graphs.}
\label{tbl:Cora_50}
\centering
\small
\setlength{\tabcolsep}{4pt}
\renewcommand{\arraystretch}{1.2}

\begin{tabular}{lll|cccc|cccc}
\hline
\multirow{3}{*}{Sampling} & \multirow{3}{*}{Decision} & \multirow{3}{*}{Output}
 & \multicolumn{4}{c|}{\gcn} & \multicolumn{4}{c}{\gat} \\

 &  &  & \multicolumn{2}{c}{full edge} & \multicolumn{2}{c|}{no edges} & \multicolumn{2}{c}{full edge} & \multicolumn{2}{c}{no edges} \\

 &  &  & BMP-R & MP & BMP-R & MP & BMP-R & MP & BMP-R & MP \\
\hline
\multirow{4}{*}{random} & \multirow{2}{*}{weak} & clean  & 4.1/4.4/4.8 & 4.6/4.6/4.8 & 6.7/8.1/10 & 6.8/8/10 & 2.4/2.5/2.5 & 3.6/3.6/3.6 & 6.3/7.7/10 & 6.3/7.5/9.8 \\
 &  & noisy  & 2.8/2.8/2.8 & 3.1/3.1/3.2 & 4.4/4.7/5.3 & 4.4/4.6/4.9 & 1.5/1.5/1.6 & 3.1/3.2/3.3 & 4.6/5/5.6 & 4.4/4.7/5 \\
 & \multirow{2}{*}{strong} & clean  & 3.8/3.9/4.3 & 4.1/4.1/4.2 & 6.7/8/10 & 6.8/8.1/10 & 2.4/2.4/2.5 & 3.6/3.6/3.6 & 6.2/7.7/9.8 & 6.4/7.5/10 \\
 &  & noisy  & 2.8/2.8/2.8 & 3.2/3.2/3.3 & 4.5/4.8/5.5 & 4.4/4.6/4.9 & 2.9/2.9/2.9 & 3.2/3.2/3.2 & 4.6/5.1/5.7 & 4.6/4.8/5.1 \\
\hline
\multirow{4}{*}{snowball} & \multirow{2}{*}{weak} & clean  & 7.2/8.6/11 & 10/11/13 & 7/8.3/11 & 10/11/13 & 7.2/8.5/11 & 10/11/13 & 5.9/6.6/8.2 & 9.7/11/13 \\
 &  & noisy  & 5.7/6.4/8.4 & 9.5/11/13 & 6.3/7.5/9.9 & 9.4/11/13 & 6.1/7/9.2 & 9.8/11/13 & 5.7/6.2/7.6 & 9.6/11/13 \\
 & \multirow{2}{*}{strong} & clean  & 6.8/8/10 & 10/12/14 & 6.7/7.9/10 & 10/11/13 & 6.7/7.9/11 & 10/12/14 & 5.7/6.5/8.3 & 9.9/11/13 \\
 &  & noisy  & 5.6/6.2/7.8 & 9.8/11/13 & 5.5/6/7.2 & 9.8/11/13 & 5.8/6.6/8.4 & 10/11/13 & 5.5/6/7.3 & 9.8/11/13 \\
\hline
\end{tabular}
\end{table*}

\begin{table*}[ht]
\caption{Hierarchical results table (p5 / p50 / p95 per cell) when CiteSeer is employed with $25\%$ nodes sampled to build train graphs.}
\label{tbl:CiteSeer_25}
\centering
\small
\setlength{\tabcolsep}{4pt}
\renewcommand{\arraystretch}{1.2}

\begin{tabular}{lll|cccc|cccc}
\hline
\multirow{3}{*}{Sampling} & \multirow{3}{*}{Decision} & \multirow{3}{*}{Output}
 & \multicolumn{4}{c|}{\gcn} & \multicolumn{4}{c}{\gat} \\

 &  &  & \multicolumn{2}{c}{full edge} & \multicolumn{2}{c|}{no edges} & \multicolumn{2}{c}{full edge} & \multicolumn{2}{c}{no edges} \\

 &  &  & BMP-R & MP & BMP-R & MP & BMP-R & MP & BMP-R & MP \\
\hline
\multirow{4}{*}{random} & \multirow{2}{*}{weak} & clean  & 6.1/7.5/10 & 7.6/8.9/11 & 7.2/8.6/11 & 8.6/9.9/12 & 5.9/7.3/9.7 & 7/8.3/11 & 7.5/8.8/11 & 9/10/12 \\
 &  & noisy  & 3.1/3.2/3.5 & 4.2/4.2/4.3 & 4.8/5.6/7.7 & 5.7/6.1/7.1 & 3/3.1/3.3 & 4.5/4.5/4.6 & 4.3/4.8/6 & 5.4/5.8/6.6 \\
 & \multirow{2}{*}{strong} & clean  & 6.9/8.2/11 & 7.8/9.2/11 & 8/9.4/12 & 9.1/10/13 & 6.6/8.1/11 & 7.6/8.9/11 & 8.2/9.6/12 & 9.3/11/13 \\
 &  & noisy  & 4.3/5.4/7.7 & 4.8/5/5.4 & 5.9/7.3/9.7 & 6.5/7.3/9.4 & 3.9/4.5/6.4 & 4.5/4.6/4.9 & 5.5/6.6/9 & 6/6.5/7.5 \\
\hline
\multirow{4}{*}{snowball} & \multirow{2}{*}{weak} & clean  & 5.1/6.4/9.5 & 11/13/15 & 5.5/6.9/9.4 & 11/13/15 & 3.8/4.3/6.1 & 11/12/14 & 3.5/3.8/4.5 & 11/12/14 \\
 &  & noisy  & 3.5/3.9/4.6 & 11/12/14 & 3.7/4/4.8 & 11/12/14 & 2.9/3.2/3.5 & 11/12/14 & 2.9/3/3.2 & 11/12/14 \\
 & \multirow{2}{*}{strong} & clean  & 6.2/7.7/10 & 11/13/14 & 6.3/7.7/10 & 12/13/15 & 5.4/6.8/9.6 & 11/13/15 & 4.9/6.3/9.3 & 11/13/15 \\
 &  & noisy  & 4.8/6.4/9 & 11/12/15 & 5/6.4/9 & 11/13/14 & 4.7/6.2/8.8 & 11/12/15 & 3.8/5.2/8.3 & 11/12/14 \\
\hline
\end{tabular}
\end{table*}

\begin{table*}[ht]
\caption{Hierarchical results table (p5 / p50 / p95 per cell) when CiteSeer is employed with $50\%$ nodes sampled to build train graphs.}
\label{tbl:CiteSeer_50}
\centering
\small
\setlength{\tabcolsep}{4pt}
\renewcommand{\arraystretch}{1.2}

\begin{tabular}{lll|cccc|cccc}
\hline
\multirow{3}{*}{Sampling} & \multirow{3}{*}{Decision} & \multirow{3}{*}{Output}
 & \multicolumn{4}{c|}{\gcn} & \multicolumn{4}{c}{\gat} \\

 &  &  & \multicolumn{2}{c}{full edge} & \multicolumn{2}{c|}{no edges} & \multicolumn{2}{c}{full edge} & \multicolumn{2}{c}{no edges} \\

 &  &  & BMP-R & MP & BMP-R & MP & BMP-R & MP & BMP-R & MP \\
\hline
\multirow{4}{*}{random} & \multirow{2}{*}{weak} & clean  & 7.2/8.6/11 & 8.3/9.7/12 & 8.5/9.9/12 & 8.8/10/12 & 6.5/8/10 & 7.1/8.4/11 & 7.7/9/11 & 8/9.5/12 \\
 &  & noisy  & 4.3/4.6/5 & 4.5/4.8/5.1 & 5.7/6.3/7.7 & 5.7/6.1/6.9 & 3.7/3.8/4 & 4.2/4.2/4.3 & 4.6/4.9/5.3 & 4.6/4.8/5 \\
 & \multirow{2}{*}{strong} & clean  & 7.2/8.5/11 & 8.3/9.7/12 & 8.5/9.8/12 & 8.8/10/12 & 6.6/8/11 & 7.1/8.4/11 & 7.6/9/11 & 7.9/9.2/11 \\
 &  & noisy  & 4.4/4.6/5.1 & 4.6/4.8/5.1 & 5.7/6.3/7.7 & 5.7/6.1/6.9 & 3.8/3.8/3.9 & 4.4/4.4/4.4 & 4.6/4.9/5.3 & 4.5/4.8/5 \\
\hline
\multirow{4}{*}{snowball} & \multirow{2}{*}{weak} & clean  & 8.7/10/12 & 12/13/15 & 7.3/8.7/11 & 12/13/15 & 7.1/8.4/11 & 12/13/15 & 5.5/6.2/8.1 & 11/12/14 \\
 &  & noisy  & 5.5/6.3/8.4 & 11/12/14 & 5.1/5.4/6.6 & 11/12/14 & 4.6/4.9/5.3 & 11/12/14 & 4.3/4.5/4.9 & 11/12/14 \\
 & \multirow{2}{*}{strong} & clean  & 9/10/13 & 12/13/15 & 7.5/8.9/11 & 12/13/15 & 7.3/8.5/11 & 12/13/15 & 5.7/6.5/8.9 & 11/13/15 \\
 &  & noisy  & 5.7/6.6/8.7 & 11/12/14 & 5/5.6/7.4 & 11/12/14 & 4.7/5.1/6.3 & 11/13/14 & 4.3/4.5/4.9 & 11/12/14 \\
\hline
\end{tabular}
\end{table*}

\end{document}